\newtheorem{theorem}{Theorem}
\newtheorem{proposition}{Proposition}
\newtheorem{lemma}{Lemma}
\theoremstyle{definition}
\newtheorem{definition}{Definition}
\newtheorem{assumption}{Assumption}
\DeclarePairedDelimiter\floor{\lfloor}{\rfloor}
\newcommand{\Cov}{\mathrm{Cov}}
\newcommand{\bbR}{\mathbb{R}}
\newcommand{\bbE}{\mathbb{E}}
\newcommand{\bbP}{\mathbb{P}}
\newcommand{\Var}{\mathrm{Var}}
\newcommand{\st}{\text{ } \vert \text{ } }
\newcommand{\bigst}{\text{ } \Bigg| \text{ } }
\newcommand{\indic}{\mathds{1}}
\newcommand{\indep}{\perp \!\!\! \perp}
\newcommand{\limn}{\underset{n \to \infty}{\mathrm{lim }}}
\newcommand{\strat}{\mathrm{Strat}}
\newcommand{\bern}{\mathrm{Bern}}
\newcommand{\sib}{\mathrm{SIB}}
\newcommand{\tod}{\overset{\mathrm{d}\ }{\to}}
\newcommand{\bsone}{\boldsymbol{1}}
\newcommand{\phm}{\phantom{-}}
\newcommand{\zero}{\ } 
\newcommand{\zcdots}{\ } 
\newcommand{\zvdots}{\ } 
\title{Optimal Correlation for Bernoulli Trials with Covariates}
\author{Tim Morrison and Art B. Owen}
\date{December 2024}
\begin{document}

\maketitle

\begin{abstract}
Given covariates for $n$ units, each of which is to receive a treatment with probability $1/2$, we study the question of how best to correlate their treatment assignments to minimize the variance of the IPW estimator of the average treatment effect. Past work by \cite{bai2022} found that the optimal stratified experiment is a matched-pair design, where the matching depends on oracle knowledge of the distributions of potential outcomes given covariates. We study the strictly broader class of all admissible correlation structures, for which \cite{cytrynbaum2023} recently showed that the optimal design is to divide the units into two clusters and uniformly assign treatment to exactly one of them. This design can be computed by solving a 0-1 knapsack problem that uses the same oracle information. We derive a novel proof of this fact using a result about admissible Bernoulli correlations. We also show how to construct a shift-invariant version of the design that still improves on the optimal stratified design by ensuring that exactly half of the units are treated. A method with just two clusters is not robust to a bad proxy for the oracle, and we mitigate this with a hybrid that uses $O(n^\alpha)$ clusters for $0<\alpha<1$. Under certain assumptions, we also derive a CLT for the IPW estimator under these designs and a consistent estimator of the variance. We compare the proposed designs to the optimal stratified design in simulated examples and find strong performance.  
\end{abstract} 

\section{Introduction}
Consider a setting in which we have access to covariate information $X_i \in \bbR^d$ for $n$ units and run an experiment to estimate the average effect of some binary treatment $Z_i \in \{0, 1\}$. Given treatment assignment, we observe either the response $Y_i(1)$ corresponding to treatment or $Y_i(0)$ corresponding to control. A natural question to ask is whether we can leverage this covariate data to inform our experimental design and ultimately improve treatment effect estimation.

In this paper, we focus specifically on the question of how best to correlate the $Z_i$ terms when each has to be Bernoulli with treatment probability $1/2$. This can be seen as a fairness condition that ensures equal access to treatment. Given access to pre-experiment estimates of the expected responses, \cite{cytrynbaum2023} recently showed how to use a knapsack algorithm to derive a design that would be optimal if we had an oracle version of those estimates. This does not require assuming a parametric form for the data-generating process. We explore this design and further ones inspired by it in greater detail. 

Our work builds on and is directly inspired by \cite{bai2022}, itself a generalization of \cite{barrios2014}. \cite{bai2022} studies how best to stratify experimental units so that exactly $50\%$ of the units in each stratum are selected for treatment. Intuitively, it may be best to pair ``similar" units to guarantee that exactly one receives treatment and one receives control, thereby honing in on the variability caused by treatment. \cite{bai2022} makes this precise by showing that, among all possible stratifications, the optimal one is in fact a matched-pair design in which units are sorted by $g_i = g(X_i) := \bbE[Y_i(1) + Y_i(0) \st X_i]$ and then consecutive units are paired (assuming that $n$ is even), with exactly one unit in each pair receiving treatment.

Bai's (2022) \nocite{bai2022} function $g(\cdot)$ is our oracle.
Although $g(X)$ is never actually known ahead of time, there are several settings in which an experimenter might reasonably believe that they have a good proxy for it.  In A/B testing \citep{koha:tang:xu:2020}, for example, it is common for treatment effects to be very small but nonetheless very valuable. Hence, for a new treatment, a good estimate of $\bbE[Y_i(0) \st X_i]$ from past data provides a good proxy for $g(X_i)/2$. In addition, if $X$ is one-dimensional and there is prior reason to believe that $g(X)$ is monotone in $X$, the optimal stratified design is to pair adjacent units in $X$ regardless of the exact values of $g(X_i)$. Finally, one may have access to observational data relating $Y$ to $X$ and $Z$, which could suffer from confounding bias. Nevertheless, it could still be used to estimate a proxy for $g(X)$, and a matched-pair design run with a reasonable yet inexact estimate may still yield a variance improvement. This provides a way of using observational data to inform the design of an experiment, as in \cite{rosenman2021}.

First, we derive a novel proof of a result shown recently in \cite{cytrynbaum2023} about the optimal design in the strictly broader class of all admissible Bernoulli correlations. In this case, it is optimal to divide the experimental units into disjoint subsets $\mathcal{S}$ and $\mathcal{S}^c$ for which $\sum_{i \in \mathcal{S}} g(X_i)$ and $\sum_{i \in \mathcal{S}^c} g(X_i)$ are as close as possible, and then to assign treatment randomly to exactly one of these two subsets. This is similar to a cluster randomized trial \citep{raudenbush1997, puffer2005}, in which different clusters are independent and units in the same cluster receive the same treatment status; here, though, the two clusters are perfectly anti-correlated, as in a stratified design. Surprisingly, the entire design is determined by a single coin flip.

We also identify and correct two potential flaws in this optimal design.
First, it is not invariant to additive shifts in the potential outcomes by a constant, such as conversion from Celsius to Kelvin. 
To obtain a shift-invariant method, we add the requirement that $|\mathcal{S}| = n/2$ (assuming that $n$ is even). Second, the optimal correlated Bernoulli design is especially sensitive to true knowledge of $g(X)$, so that a bad estimate of $g$ can, in the worst case, cause a large increase in the true variance. Intuitively, this is because the optimal correlated Bernoulli design uses only $2$ random bits whereas the optimal stratified design uses $2^{n/2}$. We propose a hybrid using $2^{\lfloor n^\alpha\rfloor}$ random bits for $0<\alpha<1$, with $\alpha = 1/2$ chosen in simulation.
We show in both theory and simulation that this hybrid still improves on the optimal stratified design while attaining much greater robustness. 

Though our main result also appears there, \cite{cytrynbaum2023} does not discuss these designs in much detail. Rather, that paper primarily studies the unrelated problem of simultaneously performing both sample selection and treatment assignment for an experiment. They employ a two-stage stratification procedure inspired by ideas in \cite{bai2022} and \cite{bai-romano2022}, in which units are first stratified to decide who is enrolled in the experiment and then the enrollees are further stratified to allocate treatment.  

Our work is part of a growing literature on covariate-assisted design of a treatment assignment scheme. One approach to this task is to assume a parametric model for the data such as a linear model \citep{liowenTBD, morrisonowenTBD}, in which case one can write the variance of the treatment effect in terms of an information matrix that depends on the covariates and derive an optimization problem. In the absence of strong modeling assumptions, one may instead try to learn an optimal treatment propensity in an adaptive or sequential setting \citep{hahn2011, offer2021, dai2023,li:owen:2024}. Lastly, much work focuses on ensuring covariate balance between treated and control units (e.g., \cite{rosenberger2008}, \cite{hansen2008}, \cite{li2018}, \cite{morgan2012}, \cite{qin2022}, \cite{ma2020}, \cite{harshaw2024}) and argues that this balance provides some benefit at the estimation stage. 

The optimal correlated Bernoulli design presented here and the optimal stratified design of \cite{bai2022} are similar to covariate balance in that they aim to balance similar units to reduce variance; however, the balance here is based on an oracle estimate (or suitable proxy).
One could instead study a two-stage procedure in which the first stage is meant to learn an optimal stratification in the second stage, as \cite{tabord-meehan2023} does for a class of stratifications that can be represented via decision trees. 

This paper is organized as follows. Section \ref{sec:setup} introduces our notation, estimator, and key assumptions. Section \ref{ssec:opt-corr} outlines our main result on the optimal Bernoulli covariance structure and explains some consequences. Sections \ref{ssec:shift-invar} and \ref{ssec:robust} highlight some imperfections of the optimal correlated Bernoulli design and use them to motivate principled alternatives. Section \ref{sec:asymptotics} presents some results about the asymptotic behavior of our estimator under the optimal correlated Bernoulli design. Finally, Section \ref{sec:sims} presents a simulated example comparing the performance of each design. Proofs can be found in Appendix~\ref{sec:proofs}. Additional simulations are in Appendix~\ref{sec:add_sims}. 

\section{Setup} \label{sec:setup} 
We assume access to $n$ experimental units. For unit $i$, we let $X_i \in \mathcal{X} \subseteq \bbR^d$ denote their covariates, which may be continuous, discrete, or a mixture of the two. We write $Z_i \in \{0, 1\}$ for the $i$'th unit's treatment status, with $Z_i = 1$ corresponding to treatment and $Z_i = 0$ to control. We use $Z\in\{0,1\}^n$ to represent all $n$ treatment values and $X\in\bbR^{n\times d}$ for all of the covariates.

We adopt the potential outcomes framework and make the stable unit treatment value assumption (SUTVA) that $Y_i(Z) = Y_i(Z_i)$. We let $Y_i(1)$ and $Y_i(0)$ denote, respectively, the potential outcomes corresponding to treatment and control for unit $i$. We assume that the tuples $\{(Y_i(1), Y_i(0), X_i)\}$ are drawn IID from some distribution. We let $\mu_1(X_i) = \bbE[Y_i(1) \st X_i]$, $\sigma_1^2(X_i) = \Var(Y_i(1) \st X_i)$, and analogously for $\mu_0(X_i)$ and $\sigma_0^2(X_i)$. 

We focus on the setting in which each $Z_i$ is Bern$(1/2)$ conditional on the covariates (i.e., $\bbP(Z_i = 1 \st X_i) = 1/2$ for all $i \in \{1, 2, \ldots, n\})$ and is uncorrelated with any potential outcomes conditional on the covariates:
$$Z_i \indep \{Y_i(1), Y_i(0)\} \st X_i.$$
This is guaranteed to occur in an experimental design setting in which the experimenter chooses the distribution of $Z_i$ given access only to the covariates. 

However, we allow for the treatment statuses of different units to be correlated with each other. This provides some degree of fairness by ensuring that each unit has the same probability of treatment while still allowing the experimenter to leverage covariate information to improve treatment effect estimation. 

We summarize the class of designs of interest to us via the following definition. 

\begin{definition}\label{def:multivariate-Bern}
A \textit{correlated Bernoulli design} with covariance matrix $\Sigma \in \bbR^{n \times n}$ is a distribution $P$ on $\{0, 1\}^n$ such that, for $Z \sim P$,
\begin{align*} 
\bbP(Z_i = 1) &= \frac{1}{2} \quad\text{and}\quad
\Cov(Z_i, Z_{i'}) = \Sigma_{ii'}. 
\end{align*}
\end{definition}
Of note, we use the term ``correlated Bernoulli design" for the specific case in which each marginal distribution has treatment probability $p = 1/2$, since we do not consider other settings in this paper. We could also assume in Definition \ref{def:multivariate-Bern} that each entry of $\Sigma$ is in $[-1/4, 1/4]$ since that must be the case for a valid covariance matrix for Bern($1/2$) random variables. 

\cite{bai2022} instead studies stratified designs in which the $n$ units are divided into strata and a simple random sample of units in each stratum are given treatment. We formalize this in the following definition. 
\begin{definition}\label{def:stratified}
A \textit{stratified design} is any treatment assignment scheme defined as follows: \\ 
(1) Construct a partition of $\{1, 2, \ldots, n\}$ into disjoint subsets $\{\lambda_1, \lambda_2, \ldots, \lambda_S\}$, with each $|\lambda_s|$ even, which may depend on $\{X_1, X_2, \ldots, X_n\}$. \\ 
(2) Independently for $1 \leq s \leq S$, uniformly at random select $|\lambda_s|/2$ units to receive treatment, and the rest to receive control.
\end{definition}
Note that Definition \ref{def:stratified} assumes even strata sizes and equal numbers of treated and control units in each stratum, which ensures marginal Bern$(1/2)$ distributions for each unit. 

As in \cite{bai2022}, we aim to estimate the average treatment effect conditional on covariates:
$$\tau_n = \frac{1}{n} \sum_{i = 1}^{n} \bbE[Y_i(1) - Y_i(0) \st X_i] := \frac{1}{n} \sum_{i = 1}^{n} \tau(X_i).$$
While our primary interest is on $\tau_n$, in Section \ref{sec:asymptotics} we also report some asymptotic results for $\tau = \bbE[\tau(X_i)].$ Because $\tau=\bbE[\tau_n]$, $\hat\tau_n$ may also be viewed as an estimate of $\tau$.

A natural estimator of $\tau_n$ in this setting is the inverse propensity weighting (IPW) or Horvitz-Thompson estimator \citep{horvitz-thompson}:
\begin{equation} \label{eq:tau-hat} \hat{\tau}_n = \frac{1}{n} \sum_{i = 1}^{n} \frac{Y_iZ_i}{\bbP(Z_i = 1)} - \frac{Y_i(1 - Z_i)}{\bbP(Z_i = 0)} = \frac{2}{n} \sum_{i = 1}^{n} Y_i Z_i - Y_i(1 - Z_i). \end{equation}
Equation \eqref{eq:tau-hat} uses the fact that each unit has a treatment probability of $1/2$ in a correlated Bernoulli design. \cite{bai2022} instead considers the difference-of-means estimator for $\tau_n$. Because there are always $n/2$ treated units for his stratified designs, he arrives at the same final formula for $\hat{\tau}_n$:
$$\hat{\tau}_n^{DM} = \frac{1}{n_1} \sum_{i : Z_i = 1} Y_i - \frac{1}{n_0} \sum_{i : Z_i = 0} Y_i = \frac{2}{n} \sum_{i = 1}^{n} Y_i Z_i - Y_i(1 - Z_i),$$
Here, $n_1 = n/2$ is the number of treated units and $n_0 = n/2$ is the number of control units. Since $n_1$ and $n_0$ need not be fixed in our setting (and, at least a priori, could be zero) it is more natural for us to arrive at $\hat{\tau}_n$ as an IPW estimator.

Note that $\hat{\tau}_n$ in \eqref{eq:tau-hat} is guaranteed to be unbiased conditional on $X$, which follows because each $Z_i$ is marginally Bern$(1/2)$:
\begin{align*} \mathbb{E}[\hat{\tau}_n \st X] &= \frac{2}{n} \sum_{i = 1}^{n} \mathbb{E}[Y_i Z_i - Y_i(1 - Z_i) \st X] \\ 
&= \frac{2}{n} \sum_{i = 1}^{n} \mathbb{E}[Y_i(1) Z_i - Y_i(0)(1 - Z_i) \st X_i] \\
&= \frac{1}{n} \sum_{i = 1}^{n} \mathbb{E}[Y_i(1) - Y_i(0) \st X_i] \\ 
&= \tau_n. \end{align*}
Here, we use in the second equality that $Y_iZ_i = Y_i(1)Z_i$ and $Y_i(1 - Z_i) = Y_i(0)(1 - Z_i)$, and in the third equality that $Z_i$ is Bern$(1/2$) and independent of the potential outcomes given the covariates. 

As in \cite{bai2022}, we aim to minimize the variance of $\hat{\tau}_n$ conditional on the observed covariates: 
$$\Var(\hat{\tau}_n \st X) = \mathbb{E}[(\hat{\tau}_n - \tau_n)^2 \st X].$$
Due to unbiasedness, this is the same as minimizing the conditional mean-squared error. 

In the next subsection, we show how to choose $\Sigma = \Cov(Z)$ to minimize this quantity across all correlated Bernoulli designs. \cite{bai2022} studies the analogous question for selecting strata in a stratified design in the sense of Definition \ref{def:stratified}. Our setting is strictly more general, in that any stratified design corresponds to some choice of $\Sigma$, but there exist valid choices of $\Sigma$ that do not correspond to a stratified design. 
For example,
$$
\Sigma  
=\frac14\begin{bmatrix}
\phm1 & -\frac12 & -\frac12 & \phm0\\[.6ex]
-\frac12 & \phm1 & \phm0 & -\frac12\\[.6ex]
-\frac12 &\phm0 & \phm1 & -\frac12 \\[.6ex]
\phm0 &-\frac12 & -\frac12 & \phm1 \\[.6ex]
\end{bmatrix}
$$
cannot be obtained by stratification.  It comes from an equal probability choice between two stratifications, one with the index pairs $\{(1, 2), (3, 4)\}$ and one with the index pairs $\{(1, 3), (2, 4)\}$.
Another interesting covariance matrix is
$$\Sigma = \frac{1}{4} \begin{bmatrix} \textbf{1}_m\textbf{1}_m^{\top} & -\textbf{1}_m\textbf{1}_{n-m}^{\top} \\ -\textbf{1}_{n-m}\textbf{1}_m^{\top} & \textbf{1}_{n-m}\textbf{1}_{n-m}^{\top} \end{bmatrix},$$
for $1\le m<n$, where $\textbf{1}_d$ is the vector of all ones in $d$ dimensions. This covariance corresponds to the following design: draw a single $Z \sim \text{Bern}(1/2)$, and assign a treatment status of $Z$ to the first $m$ units and a treatment status of $1 - Z$ to the remaining $n - m$ units. This is not a stratified design in the sense of Definition 2.
We show in the next subsection that a particular design of this form is in fact variance-minimizing among all correlated Bernoulli designs.

\section{Design construction}
\subsection{Optimal Bernoulli covariances}\label{ssec:opt-corr}
We begin this subsection by restating (and, for convenience, reproving) a result in \cite{bai2022} that provides a more concrete formula for how $\Var(\hat{\tau}_n \st X)$ depends on a given design. 

\begin{lemma}\label{lemma:equiv-objective}
Let $Z$ be a distribution on $\bbR^n$ whose marginal distributions are each \textnormal{Bern($1/2$)}. For $\Sigma = \Cov(Z)$, 
\begin{align} \label{eq:var-decomp}
\begin{split}
\textnormal{Var}(\hat{\tau}_n \st X) &= \bbE[\textnormal{Var}(\hat{\tau}_n \st X, Z) \st X] + \textnormal{Var}(\bbE[\hat{\tau}_n \st X, Z] \st X) \\ 
&= \frac{2}{n^2} \sum_{i = 1}^{n} \bigl(\sigma_1^2(X_i) + \sigma_0^2(X_i) \bigr)+ \frac{4}{n^2} g^{\top} \Sigma g, 
\end{split}
\end{align}
where $g \in \bbR^n$ is defined by 
\begin{align}\label{eq:defineg}
g_i = \bbE[Y_i(1) + Y_i(0) \st X_i].
\end{align}
\end{lemma}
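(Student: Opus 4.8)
The plan is to apply the law of total variance by conditioning on $(X,Z)$ — exactly as the first line of \eqref{eq:var-decomp} indicates — and then evaluate the two resulting pieces separately, leaning on three structural facts: each $Z_i$ is marginally $\mathrm{Bern}(1/2)$ (so $\bbE[Z_i\mid X]=1/2$), $Z\indep\{Y_i(1),Y_i(0)\}\mid X$, and the triples $(Y_i(1),Y_i(0),X_i)$ are IID (so the pairs $(Y_i(1),Y_i(0))$ are independent across $i$ given $X$, and $\Cov(Z_i,Z_{i'}\mid X)=\Sigma_{ii'}$ by the definition of the design).

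First I would handle the ``within'' term $\bbE[\Var(\hat{\tau}_n\mid X,Z)\mid X]$. Writing $\hat{\tau}_n=\tfrac2n\sum_i\bigl(Y_iZ_i-Y_i(1-Z_i)\bigr)$ and using $Y_iZ_i=Y_i(1)Z_i$, $Y_i(1-Z_i)=Y_i(0)(1-Z_i)$, the $i$-th summand equals $Y_i(1)$ on $\{Z_i=1\}$ and $-Y_i(0)$ on $\{Z_i=0\}$. Because additionally conditioning on $Z$ does not change the conditional law of the potential outcomes (by $Z\indep\{Y_i(1),Y_i(0)\}\mid X$), and these are independent across $i$ given $X$, the off-diagonal terms drop and $\Var(\hat{\tau}_n\mid X,Z)=\tfrac{4}{n^2}\sum_i\bigl(Z_i\sigma_1^2(X_i)+(1-Z_i)\sigma_0^2(X_i)\bigr)$. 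Taking $\bbE[\cdot\mid X]$ and substituting $\bbE[Z_i\mid X]=1/2$ gives the first term $\tfrac{2}{n^2}\sum_i\bigl(\sigma_1^2(X_i)+\sigma_0^2(X_i)\bigr)$.

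Next, for the ``between'' term $\Var(\bbE[\hat{\tau}_n\mid X,Z]\mid X)$, the independence of $Z$ and the potential outcomes given $X$ gives $\bbE[Y_iZ_i\mid X,Z]=Z_i\mu_1(X_i)$ and $\bbE[Y_i(1-Z_i)\mid X,Z]=(1-Z_i)\mu_0(X_i)$, so $\bbE[\hat{\tau}_n\mid X,Z]=\tfrac2n\sum_i\bigl(Z_i(\mu_1(X_i)+\mu_0(X_i))-\mu_0(X_i)\bigr)$. The $\mu_0(X_i)$ terms are constant given $X$ and hence drop out of the variance; with $g_i=\mu_1(X_i)+\mu_0(X_i)$ as in \eqref{eq:defineg} and $\Cov(Z_i,Z_{i'}\mid X)=\Sigma_{ii'}$, this becomes $\tfrac{4}{n^2}\sum_{i,i'}g_ig_{i'}\Sigma_{ii'}=\tfrac{4}{n^2}g^{\top}\Sigma g$. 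Adding the two terms yields \eqref{eq:var-decomp}.

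The one step I would be most careful about is the repeated appeal to $Z\indep\{Y_i(1),Y_i(0)\}\mid X$: it is what licenses replacing conditional moments of $Y_i$ given $(X,Z)$ by their values given $X_i$ alone, and the cross-sectional independence of $(Y_i(1),Y_i(0))$ given $X$ is what makes the within-group variance diagonal. Everything else is routine bookkeeping once these are stated; one should also note explicitly that the covariance appearing in the decomposition is the (conditional-on-$X$) design covariance $\Sigma$ of Definition~\ref{def:multivariate-Bern}.
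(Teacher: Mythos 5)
Your proposal is correct and follows essentially the same route as the paper's proof: the law of total variance conditional on $(X,Z)$, the within-term computed via the cross-sectional conditional independence of the potential outcomes and $\bbE[Z_i \st X]=1/2$, and the between-term reduced to $\tfrac{4}{n^2}\Var(g^{\top}Z \st X)=\tfrac{4}{n^2}g^{\top}\Sigma g$ after the constant $\mu_0(X_i)$ terms drop out. The points you flag as needing care (the conditional independence $Z\indep\{Y_i(1),Y_i(0)\}\st X$ and the vanishing of the $Z_i(1-Z_i)$ cross-terms) are exactly the ones the paper's proof also highlights.
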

The proofs of this and most subsequent results are in Appendix \ref{sec:proofs}. Lemma \ref{lemma:equiv-objective} implies that the only property of a correlated Bernoulli design that affects the resulting variance of $\hat{\tau}_n$ is its covariance matrix $\Sigma$, and specifically the quadratic form $g^{\top} \Sigma g$. 

Lemma \ref{lemma:equiv-objective} is stated in terms of the vector $g$, which \cite{bai2022} refers to as the \textit{index function}. Of course, $g$ is never known in practice. However, proxies for $g$, such as those described in the introduction, may bring improvements over a fully randomized design.
Even a poorly-chosen proxy still results in an unbiased estimator. We prove a central limit theorem in Section~\ref{sec:asymptotics} that holds when the proxy is sufficiently accurate.
We return to the role of $g$ and potential consequences of its misestimation in Section \ref{ssec:robust}. 

We are now ready to state our main result, which characterizes the correlated Bernoulli design(s) that minimize $\Var(\hat{\tau}_n \st X)$. 

\begin{theorem}\label{thm:opt-design}
Let $\mathcal{S} \subset \{1, 2, \ldots n\}$ be any subset of indices that solves the following optimization problem:
\begin{align} \label{eq:knapsack}
\begin{split} 
    \underset{S \subset \{1, 2, \ldots, n\}}{\max} \text{ } &\sum_{i \in S} g_i \\ 
    \textnormal{such that } &\sum_{i \in S} g_i \leq \frac{1}{2} \sum_{i = 1}^{n} g_i.
\end{split} 
\end{align}
Consider the design defined as follows: draw a single $Z^* \sim \bern(1/2)$, and assign a treatment status of $Z^*$ to every unit in $\mathcal{S}$ and a treatment status of $1 - Z^*$ to every unit in $\mathcal{S}^c$. This design minimizes $\textnormal{Var}(\hat{\tau}_n \st X)$ among all multivariate designs whose marginal distributions are \textnormal{Bern($1/2$)}. Moreover, 
\begin{equation} \label{final-variance} \mathrm{Var}(\hat{\tau}_n \st X) = \frac{2}{n^2} \sum_{i = 1}^{n}\bigl( \sigma_1^2(X_i) + \sigma_0^2(X_i)\bigr) + \frac{1}{n^2} \left(\sum_{i \in \mathcal{S}} g_i - \sum_{i \in \mathcal{S}^c} g_i \right)^2. \end{equation} 
\end{theorem}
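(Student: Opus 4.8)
The plan is to invoke Lemma~\ref{lemma:equiv-objective} to reduce the problem to minimizing the quadratic form $g^{\top}\Sigma g$ over all covariance matrices $\Sigma$ arising from a distribution on $\{0,1\}^n$ with \textnormal{Bern}$(1/2)$ marginals, and then to bound this quadratic form below pointwise. By Lemma~\ref{lemma:equiv-objective}, for any such design $\Var(\hat\tau_n\mid X)=\frac{2}{n^2}\sum_{i=1}^n\bigl(\sigma_1^2(X_i)+\sigma_0^2(X_i)\bigr)+\frac{4}{n^2}g^{\top}\Sigma g$, and only the last term depends on the design, so it suffices to minimize $g^{\top}\Sigma g$. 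Given a correlated Bernoulli design with $Z\sim P$, set $W=2Z-\bsone\in\{-1,1\}^n$, so that $\bbE[W]=\bszero$ and $\Cov(W)=4\Sigma$; hence $g^{\top}\Sigma g=\tfrac14\Var(g^{\top}W)=\tfrac14\bbE\bigl[(g^{\top}W)^2\bigr]$. For a fixed sign vector $w\in\{-1,1\}^n$, writing $S_w=\{i:w_i=1\}$ gives $g^{\top}w=\sum_{i\in S_w}g_i-\sum_{i\notin S_w}g_i=2\sum_{i\in S_w}g_i-\sum_{i=1}^n g_i$, so $(g^{\top}w)^2\ge (D^*)^2$ where $D^*:=\min_{S\subseteq\{1,\dots,n\}}\bigl|2\sum_{i\in S}g_i-\sum_{i=1}^n g_i\bigr|$. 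Taking expectations over $W$, every correlated Bernoulli design satisfies $g^{\top}\Sigma g\ge (D^*)^2/4$.

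Next I would show that the set $\mathcal{S}$ solving \eqref{eq:knapsack} attains this bound, i.e.\ $\bigl|\sum_{i\in\mathcal{S}}g_i-\sum_{i\in\mathcal{S}^c}g_i\bigr|=D^*$. Since $2\sum_{i\in S}g_i-\sum_i g_i$ is small in absolute value exactly when $\sum_{i\in S}g_i$ is close to $\tfrac12\sum_i g_i$, and since replacing $S$ by $S^c$ preserves $\bigl|2\sum_{i\in S}g_i-\sum_i g_i\bigr|$ while reflecting $\sum_{i\in S}g_i$ about $\tfrac12\sum_i g_i$, one may restrict to subsets with $\sum_{i\in S}g_i\le\tfrac12\sum_i g_i$; among these the one maximizing $\sum_{i\in S}g_i$ is closest to $\tfrac12\sum_i g_i$, which is precisely what \eqref{eq:knapsack} returns (the feasible set is nonempty because $S=\varnothing$ or $S=\{1,\dots,n\}$ is always feasible). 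Finally, for the two-cluster design built from this $\mathcal{S}$, all coordinates of $W$ in $\mathcal{S}$ equal a single $\pm1$ variable $W^*=2Z^*-1$ and all coordinates in $\mathcal{S}^c$ equal $-W^*$, so $g^{\top}W=W^*\bigl(\sum_{i\in\mathcal{S}}g_i-\sum_{i\in\mathcal{S}^c}g_i\bigr)$ and $(g^{\top}W)^2=(D^*)^2$ with probability one; hence $g^{\top}\Sigma g=(D^*)^2/4$, meeting the lower bound, and this design has \textnormal{Bern}$(1/2)$ marginals and is independent of the potential outcomes given $X$. Substituting $g^{\top}\Sigma g=(D^*)^2/4=\tfrac14\bigl(\sum_{i\in\mathcal{S}}g_i-\sum_{i\in\mathcal{S}^c}g_i\bigr)^2$ into Lemma~\ref{lemma:equiv-objective} yields \eqref{final-variance}.

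The conceptual core---rewriting $g^{\top}\Sigma g$ as a second moment of the linear statistic $g^{\top}W$ and bounding it by the best attainable squared gap---is short once set up, and no compactness argument is needed because the optimum is attained explicitly. I expect the main obstacle to be the bookkeeping in the step identifying $D^*$ with the value of the program \eqref{eq:knapsack}: one must handle the symmetry $S\leftrightarrow S^c$ carefully and allow the $g_i$ to take both signs (so that ``knapsack'' is a loose label), and one must confirm that a maximizer subject to the one-sided constraint is genuinely a global minimizer of the absolute gap rather than merely a local one. It is also worth remarking that, although the statement singles out one optimal design, any design placing all of its mass on sign vectors $w$ with $(g^{\top}w)^2=(D^*)^2$ is equally optimal, so the minimizer need not be unique.
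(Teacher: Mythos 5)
Your proof is correct, and it takes a genuinely different route from the paper's. The paper invokes the Huber--Maric characterization (its Lemma~\ref{lemma:huber-maric}): every admissible covariance for Bern$(1/2)$ marginals is realized by a convex combination of two-point distributions $\mathrm{Unif}(\{v,\bsone_n-v\})$, so the linear objective $g^\top\bigl(\sum_v \delta_v\Cov(P_v)\bigr)g$ is minimized at a vertex of the simplex, and one then computes $g^\top\Cov(P_v)g=\tfrac14\bigl(\sum_{i:v_i=1}g_i-\sum_{i:v_i=0}g_i\bigr)^2$ and reads off the knapsack problem. You bypass that external theorem entirely: writing $W=2Z-\bsone$ gives $g^\top\Sigma g=\tfrac14\bbE[(g^\top W)^2]$ (using only that the marginals force $\bbE[W]=\bszero$), and since every realization $w\in\{-1,1\}^n$ satisfies $(g^\top w)^2\ge (D^*)^2$ pointwise, the bound follows immediately and is attained by the two-cluster design. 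Your argument is more elementary and self-contained --- it needs no characterization of which covariance matrices are admissible, only that the design is supported on $\{0,1\}^n$ with the right marginals --- whereas the paper's route is what its authors highlight as the novelty of their proof relative to \cite{cytrynbaum2023}, and it makes explicit that the full set of optimal designs consists of mixtures over sign vectors attaining the minimal gap (a fact you also note at the end). Your handling of the identification of $D^*$ with the value of \eqref{eq:knapsack} via the $S\leftrightarrow S^c$ symmetry is careful and correct, since for any $S$ at least one of $S,S^c$ is feasible for the one-sided constraint.
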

Theorem \ref{thm:opt-design} states that an optimal design divides the units into two clusters for which the sum of $g_i$ is as close as possible, and then assigns treatment with equal probability to exactly one of the two. These subsets need not be of equal size, though in the next subsection we discuss why that is a sensible constraint to add. If multiple subsets solve \eqref{eq:knapsack}, any of them is optimal. 

The optimization problem \eqref{eq:knapsack} is a particular case of the 0-1 knapsack problem in which both the weight vector and value vector are given by $g$ and the capacity is given by half its sum. While NP-complete, this problem has several polynomial-time approximations. See \cite{kellerer} for an overview of the knapsack problem and popular algorithms to solve it.

Theorem \ref{thm:opt-design} is equivalent to Theorem 4.5 in \cite{cytrynbaum2023}. Their construction frames the optimal partition via a Max-Cut problem but is mathematically equivalent to our knapsack formulation. A key difference of our proof is that it uses a result from \cite{huber-maric} that characterizes the set of all admissible correlation structures for correlated Bernoulli distributions. 

For an optimal subset $\mathcal{S}$, let $m = |\mathcal{S}|$. With indices permuted so that units in $\mathcal{S}$ appear before units in $\mathcal{S}^c$, an optimal covariance matrix is of the form 
\begin{equation} \label{eq:Sigma_Bern} \Sigma_{\bern} = \frac{1}{4} \begin{bmatrix} \textbf{1}_{m}\textbf{1}_m^{\top} & -\textbf{1}_m\textbf{1}_{n-m}^{\top} \\ -\textbf{1}_{n-m}\textbf{1}_m^{\top} & \textbf{1}_{n-m}\textbf{1}_{n-m}^{\top} \end{bmatrix}. \end{equation}

As noted in Section \ref{sec:setup}, the class of covariance matrices for stratified designs is a strict subset of the class of covariance matrices for correlated Bernoulli designs, so the resulting variance in Theorem \ref{thm:opt-design} is no worse than that of \cite{bai2022}. Their solution, which assumes that $n$ is even, permutes the units so that $g_{1} \leq g_{2} \leq \ldots \leq g_{n}$ and then stratifies them into the matched pairs $\{(1, 2), (3, 4),\ldots, (n - 1, n)\}$. The resulting design randomly selects one of the two units in each pair to receive treatment and the other to receive control. After permutation, their covariance matrix is therefore block-diagonal with 
$$\Sigma_{\text{strat}} = \frac{1}{4} \begin{bmatrix}
  \phm1 & -1 & \zero & \zero & \zcdots & \zero & \zero \\
 -1 &  \phm1 & \zero & \zero & \zcdots & \zero & \zero \\
  \zero &  \zero & \phm1 & -1 & \zcdots & \zero & \zero \\
  \zero &  \zero & -1 & \phm1 & \zcdots & \zero & \zero \\
  \zvdots & \zvdots & \zvdots & \zvdots & \ddots & \zvdots & \zvdots \\
  \zero &  \zero & \zero & \zero & \zcdots & \phm1 & -1 \\
  \zero &  \zero & \zero & \zero & \zcdots & -1 & \phm1
\end{bmatrix},$$
and 
$$g^{\top} \Sigma_{\text{strat}} g = \frac{1}{4} \sum_{i = 1}^{n/2} (g_{(2i)} - g_{(2i - 1)})^2.$$

By Lemma~\ref{lemma:equiv-objective},
the variance improvement over stratified sampling can be as high as
$$\frac4{n^2}g^{\top} (\Sigma_{\strat} - \Sigma_{\bern}) g=\frac1{n^2}\sum_{i=1}^{n/2}(g_{(2i)}-g_{(2i-1)})^2 - \left(\sum_{i \in \mathcal{S}} g_i - \sum_{i \in \mathcal{S}^c} g_i \right)^2.
$$
For given values of $g_{(1)}$ and $g_{(n)}$, this ranges from $0$ to $(g_{(n)} - g_{(1)})^2/n^2$. The latter case occurs when both the optimal knapsack difference is zero and all of but one of the increments $g_{(2i)} - g_{(2i - 1)}$ are zero. For instance, this happens when $g = [1, 1, \ldots, 1, n - 1]^{\top} \in \bbR^n$ and $\mathcal{S}=\{1,2,\dots,n-1\}$.

We can gain further intuition about the formula for the optimal correlated Bernoulli design by considering the design-dependent second term of the variance decomposition in \eqref{eq:var-decomp}. 
\cite{bai2022} calls that term 
the variance of the \textit{ex-post bias}, since we can write it as $\Var\bigl(\bbE[\hat{\tau}_n-\tau_n \st X, Z] \st X\bigr)$.
Though $\hat{\tau}_n$ is unbiased on average across all $Z$, this need not be the case conditional on any particular draw of $Z$. However, the optimal correlated Bernoulli design minimizes this variance by grouping units so that $\bbE[\hat{\tau}_n \st X, Z]$ is as tightly concentrated as possible. 

In the special case in which $g$ has the same sum in $\mathcal{S}$ and $\mathcal{S}^c$, so that $g^{\top} \Sigma_{\bern}g = 0$, the following proposition states that the ex-post bias is in fact zero. 

\begin{proposition}\label{prop:noexpostbias}
For the optimal correlated Bernoulli design,
suppose that  $\sum_{i\in\mathcal{S}}g_i=\sum_{i\in\mathcal{S}^c}g_i$ holds for $g$ defined in ~\eqref{eq:defineg}.
Then with $z \in \{0, 1\}^n$ such that $z_i = \indic\{i \in \mathcal{S}\}$, 
\begin{align}\label{eq:flipz}
\bbE[\hat\tau_n\st X,Z=z]=\bbE[\hat\tau_n\st X,Z=\bsone_n-z].
\end{align}
Hence, $\Var(\bbE[\hat\tau_n\st X,Z]\st X)=0$, and $\bbE[\hat{\tau}_n \st X, Z] = \tau_n$ with probability one in $Z$. 
\end{proposition}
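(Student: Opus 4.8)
The plan is to prove \eqref{eq:flipz} by directly evaluating the conditional mean of $\hat\tau_n$ on each of the two realizations that $Z$ can take under the optimal design, and then to read off the two consequences (vanishing ex-post-bias variance, and $\bbE[\hat\tau_n\st X,Z]=\tau_n$) essentially for free.

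The first step is a formula for $\bbE[\hat\tau_n\st X,Z=w]$ for an arbitrary fixed $w\in\{0,1\}^n$. Rewriting $\hat\tau_n=\frac2n\sum_i\bigl(Y_i(1)Z_i-Y_i(0)(1-Z_i)\bigr)$ as in Section~\ref{sec:setup}, I would argue that conditioning on the event $\{Z=w\}$ leaves the law of the potential outcomes given $X$ unchanged: under the optimal design $Z$ is a deterministic function of $X$ and of a single exogenous coin $Z^*\sim\bern(1/2)$ that is independent of $\{(Y_i(1),Y_i(0))\}_{i=1}^n$, so $\{Z=w\}$ is up to null sets an event on $Z^*$ alone, and combined with the IID sampling of the tuples $(Y_i(1),Y_i(0),X_i)$ this gives $\bbE[Y_i(b)\st X,Z=w]=\mu_b(X_i)$ for $b\in\{0,1\}$. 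Hence
$$\bbE[\hat\tau_n\st X,Z=w]=\frac2n\sum_{i=1}^n\bigl(\mu_1(X_i)w_i-\mu_0(X_i)(1-w_i)\bigr).$$

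Next I would specialize this to the two atoms of $Z$, namely $w=z$ (when $Z^*=1$) and $w=\bsone_n-z$ (when $Z^*=0$), and subtract. Regrouping, each unit $i\in\mathcal S$ contributes $+\frac2n\bigl(\mu_1(X_i)+\mu_0(X_i)\bigr)$ to the difference and each $i\in\mathcal S^c$ contributes $-\frac2n\bigl(\mu_1(X_i)+\mu_0(X_i)\bigr)$, so since $\mu_1(X_i)+\mu_0(X_i)=g_i$ by \eqref{eq:defineg},
$$\bbE[\hat\tau_n\st X,Z=z]-\bbE[\hat\tau_n\st X,Z=\bsone_n-z]=\frac2n\Bigl(\sum_{i\in\mathcal S}g_i-\sum_{i\in\mathcal S^c}g_i\Bigr),$$
which is zero exactly by the hypothesis $\sum_{i\in\mathcal S}g_i=\sum_{i\in\mathcal S^c}g_i$; this is \eqref{eq:flipz}. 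For the remaining assertions: $Z$ is supported on the two points $\{z,\bsone_n-z\}$, and \eqref{eq:flipz} says $\bbE[\hat\tau_n\st X,Z]$ takes the same value on both, so (given $X$) it is $\bbP$-a.s.\ a constant and therefore $\Var(\bbE[\hat\tau_n\st X,Z]\st X)=0$; by the tower property that constant equals $\bbE\bigl[\bbE[\hat\tau_n\st X,Z]\st X\bigr]=\bbE[\hat\tau_n\st X]=\tau_n$, the last equality being the conditional unbiasedness already established in Section~\ref{sec:setup}.

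I do not anticipate a genuine obstacle; the computation is short. The only point that warrants a careful sentence rather than a hand-wave is the identity $\bbE[Y_i(b)\st X,Z=w]=\mu_b(X_i)$: the blanket assumption in Section~\ref{sec:setup} is only the per-unit statement $Z_i\indep\{Y_i(1),Y_i(0)\}\st X_i$, so to drop the conditioning on the full vector $Z$ I would invoke the explicit construction of the optimal design (one exogenous coin $Z^*$) to upgrade this to $Z\indep\{(Y_i(1),Y_i(0))\}_{i=1}^n\st X$.
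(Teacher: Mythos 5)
Your proposal is correct and follows essentially the same route as the paper: evaluate $\bbE[\hat\tau_n\st X,Z=w]$ at the two atoms $w=z$ and $w=\bsone_n-z$, observe that their difference is $\tfrac2n\bigl(\sum_{i\in\mathcal S}g_i-\sum_{i\in\mathcal S^c}g_i\bigr)=0$, and then conclude via the two-point support and conditional unbiasedness. Your explicit remark justifying $\bbE[Y_i(b)\st X,Z=w]=\mu_b(X_i)$ from the single exogenous coin $Z^*$ is a slightly more careful treatment of a step the paper's proof passes over silently, but it is not a different argument.
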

\begin{proof}
Define $z\in\{0,1\}^n$ by $z_i = \indic\{i\in\mathcal{S}\}$. Then
\begin{align*} 
\bbE[\hat{\tau}_n \st X, Z=z] &= \frac{2}{n} \sum_{i = 1}^{n} \bbE[Y_i(1)z_i - Y_i(0)(1 - z_i) \st X, Z=z] \\ 
&= \frac{2}{n} \sum_{i : z_i = 1} \bbE[Y_i(1) \st X_i] - \frac{2}{n} \sum_{i : z_i = 0} \bbE[Y_i(0) \st X_i] \\ 
&= \frac{2}{n} \sum_{i : z_i = 1} \bigl( \bbE[Y_i(1) \st X_i] + \bbE[Y_i(0) \st X_i] - \bbE[Y_i(0) \st X_i]\bigr)  \\ 
&\quad- \frac{2}{n} \sum_{i : z_i = 0} \bigl( \bbE[Y_i(1) \st X_i] + \bbE[Y_i(0) \st X_i] - \bbE[Y_i(1) \st X_i]\bigr) \\ 
&= \frac{2}{n} \sum_{i : z_i = 0} \bbE[Y_i(1) \st X_i] - \frac{2}{n} \sum_{i : z_i = 1} \bbE[Y_i(0) \st X_i] \\ 
&= \bbE[\hat{\tau}_n \st X, Z = \bsone_n - z]
\end{align*} 
establishing~\eqref{eq:flipz}. In the penultimate equality, we used that the sums of $g$ in $\mathcal{S}$ and $\mathcal{S}^c$ are the same.
The optimal correlated Bernoulli design
has $\Pr(Z=z \st X)=\Pr(Z=\bsone_n-z \st X)=1/2$.
Because the expectation in~\eqref{eq:flipz} is the same for either treatment assignment, we get $\Var(\bbE[\hat\tau_n\st X,Z]\st X)=0$.
Moreover, because $\hat{\tau}_n$ is unbiased for $\tau_n$ conditional on $X$, each term in \eqref{eq:flipz} must equal $\tau_n$, and so the ex-post bias is zero in this special case. 
\end{proof}

\subsection{Shift invariance} \label{ssec:shift-invar}
Recall that the optimal correlated Bernoulli design chooses a subset $\mathcal{S} \subseteq \{1, 2, \ldots, n\}$ of indices that minimizes the squared knapsack difference 
\begin{equation} \label{eq:knapsack-diff} \left(\sum_{i \in \mathcal{S}} g_i - \sum_{i \in \mathcal{S}^c} g_i \right)^2, \end{equation}
where $g_i = \bbE[Y_i(1) + Y_i(0) \st X_i]$. Conveniently, this solution is invariant to scaling of all potential outcomes by the same constant, such as converting the response variable from centimeters to inches. However, it is not necessarily invariant to additive shifts of the potential outcomes by the same constant, such as converting from Celsius to Kelvin, since
$$\left(\sum_{i \in \mathcal{S}} (g_i + c) - \sum_{i \in \mathcal{S}^c} (g_i + c)\right)^2 = \left(\sum_{i \in \mathcal{S}} g_i - \sum_{i \in \mathcal{S}^c} g_i + (|\mathcal{S}| - |\mathcal{S}^c|) c\right)^2.$$
This may result in a new solution unless we add the constraint that $|\mathcal{S}| = |\mathcal{S}^c| = n/2$ (with $n$ even). Fundamentally, this occurs because the Horvitz-Thompson/IPW estimator is not shift-invariant either, even though the true $\tau_n$ is. 

Since it is often sensible that the design should be invariant to linear shifts of the response variable, we now consider the optimal shift-invariant Bernoulli design, defined by:
\begin{align} \label{eq:balanced-knapsack}
\begin{split} 
    \underset{S \subset \{1, 2, \ldots, n\}}{\max} \text{ } &\sum_{i \in S} g_i \\ 
    \textnormal{such that } &\sum_{i \in S} g_i \leq \frac{1}{2} \sum_{i = 1}^{n} g_i, \\ 
    &|\mathcal{S}| = \frac{n}{2}.
\end{split} 
\end{align}
This problem is often known as set partitioning \citep{karmarkar1983} or balanced (two-way) number partitioning. Like the knapsack problem, it is NP-complete but has several heuristic algorithms \citep{zhang2011}. We refer to the resulting optimal covariance matrix as $\Sigma_{\sib}$ (for ``shift-invariant Bernoulli"). 

Desirably, a solution to \eqref{eq:balanced-knapsack} still improves on the optimal stratified design. We state this as the following theorem. 
\begin{theorem}\label{thm:invariant-wins}
For $n$ even, let $\mathcal{S}$ be a subset that solves \eqref{eq:balanced-knapsack}. Then 
$$\left(\sum_{i \in \mathcal{S}} g_i - \sum_{i \in \mathcal{S}^c} g_i \right)^2 \leq \sum_{i = 1}^{n/2} (g_{(2i)} - g_{(2i - 1)})^2.$$
In other words, the variance from the optimal shift-invariant Bernoulli design is no larger than the variance from the optimal stratified design. 
\end{theorem}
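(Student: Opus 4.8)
The plan is to show that the matched-pair stratification of \cite{bai2022} can be rearranged into a balanced two-way partition whose squared difference is at most $\sum_{i=1}^{n/2}(g_{(2i)}-g_{(2i-1)})^2$. Since \eqref{eq:balanced-knapsack} is precisely the problem of finding, among all partitions into two halves of size $n/2$, the one minimizing the squared difference \eqref{eq:knapsack-diff}, a solution $\mathcal{S}$ can only do at least as well as any particular balanced partition, and the claimed inequality follows.

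First I would fix notation: sort the units so that $g_{(1)}\le g_{(2)}\le\cdots\le g_{(n)}$, set $d_i=g_{(2i)}-g_{(2i-1)}\ge 0$ for $i=1,\dots,n/2$, and let $P_i=\{(2i-1),(2i)\}$ be the $i$th matched pair. Any subset containing exactly one index from each $P_i$ has cardinality $n/2$, and — after possibly interchanging $\mathcal{S}$ and $\mathcal{S}^c$, since the smaller of the two $g$-sums is always at most $\frac12\sum_i g_i$ — is feasible for \eqref{eq:balanced-knapsack}. Encoding such a subset by signs $\epsilon\in\{-1,+1\}^{n/2}$, with $\epsilon_i=+1$ meaning $(2i)\in\mathcal{S}$ and $\epsilon_i=-1$ meaning $(2i-1)\in\mathcal{S}$, a one-line computation gives $\sum_{i\in\mathcal{S}}g_i-\sum_{i\in\mathcal{S}^c}g_i=\sum_{i=1}^{n/2}\epsilon_i d_i$.

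The crux is then to exhibit a sign vector $\epsilon$ with $\bigl(\sum_{i=1}^{n/2}\epsilon_i d_i\bigr)^2\le\sum_{i=1}^{n/2}d_i^2$. The cleanest argument is probabilistic: draw $\epsilon_1,\dots,\epsilon_{n/2}$ i.i.d.\ uniform on $\{-1,+1\}$; the cross terms then vanish in expectation, so $\bbE\bigl[(\sum_i\epsilon_i d_i)^2\bigr]=\sum_i d_i^2$ and at least one realization attains a value no larger than this mean. (A deterministic alternative, giving the stronger bound $(\sum_i\epsilon_i d_i)^2\le\max_i d_i^2$, processes the $d_i$ in decreasing order and greedily picks each $\epsilon_i$ with sign opposite to the running partial sum; a short induction shows every partial sum stays in $[-\max_i d_i,\max_i d_i]$.)

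Combining the pieces: let $\mathcal{S}^\star$ be the balanced partition associated with such an $\epsilon$, relabeled so that it is feasible for \eqref{eq:balanced-knapsack}. Optimality of $\mathcal{S}$ then forces $\bigl(\sum_{i\in\mathcal{S}}g_i-\sum_{i\in\mathcal{S}^c}g_i\bigr)^2\le\bigl(\sum_{i\in\mathcal{S}^\star}g_i-\sum_{i\in(\mathcal{S}^\star)^c}g_i\bigr)^2=\bigl(\sum_i\epsilon_i d_i\bigr)^2\le\sum_i d_i^2=\sum_{i=1}^{n/2}(g_{(2i)}-g_{(2i-1)})^2$, which is the claimed inequality; the variance comparison is then immediate from Lemma~\ref{lemma:equiv-objective}, since both designs have covariance matrices of the block form \eqref{eq:Sigma_Bern} (resp.\ the block-diagonal $\Sigma_{\strat}$), so the relevant quadratic forms are exactly the two squared differences. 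I do not expect a genuine obstacle here; the only points needing care are the bookkeeping that turns "largest feasible $g$-sum" in \eqref{eq:balanced-knapsack} into "smallest squared difference among balanced partitions," and — if one prefers the greedy route to the probabilistic one — the elementary induction bounding the partial sums by $\max_i d_i$.
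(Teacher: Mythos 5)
Your proposal is correct. It shares the paper's overall skeleton --- reduce to a balanced partition that takes exactly one index from each matched pair $\{(2i-1),(2i)\}$, note that its signed difference is $\sum_i \epsilon_i d_i$ with $d_i = g_{(2i)}-g_{(2i-1)}$, and then invoke optimality of the solution to \eqref{eq:balanced-knapsack} among all balanced partitions --- but the key inequality $\bigl(\sum_i \epsilon_i d_i\bigr)^2 \le \sum_i d_i^2$ is established by a genuinely different argument. The paper constructs the signs greedily in the sorted order of $g$ (append the larger element of each new pair to the side with the smaller running sum) and proves by induction that the running difference satisfies $D_{k+2}^2 \le D_k^2 + \delta_{k+2}^2$, the cross term being nonpositive by the sign choice. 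You instead use a first-moment argument: with i.i.d.\ Rademacher signs the cross terms vanish in expectation, so $\bbE\bigl[(\sum_i\epsilon_i d_i)^2\bigr]=\sum_i d_i^2$ and some realization does at least as well. Your probabilistic route is shorter and arguably cleaner, at the cost of being non-constructive (the paper's greedy partition is the one actually used elsewhere, e.g.\ in the discussion of hybrid designs and of Assumption 1(d)); your deterministic alternative --- processing the $d_i$ in decreasing order and keeping every partial sum in $[-\max_i d_i, \max_i d_i]$ --- is also valid and yields the strictly stronger bound $\max_i d_i^2$, which the paper's increasing-order greedy does not give in general. Your bookkeeping reducing "maximize the feasible $g$-sum" in \eqref{eq:balanced-knapsack} to "minimize the squared difference among balanced partitions" is handled correctly.
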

The proof of Theorem \ref{thm:invariant-wins} involves constructing an explicit balanced partition of $\{1, 2, \ldots, n\}$ for which the desired inequality holds; the same must then be true of the optimal balanced partition. This partition is defined sequentially as follows: for each pair $(g_{(2k - 1)}, g_{(2k)})$ with $k \leq n$, add one of them to $\mathcal{S}$ and the other to $\mathcal{S}^c$. To do so, simply add $g_{(2k)}$ to whichever of the two currently has a smaller total sum. As shown in the proof in Appendix \ref{sec:proofs}, this gives a lower quadratic form than the optimal stratified design. 

\subsection{Dependence on oracle knowledge} \label{ssec:robust}
We now explore the robustness of $\Sigma_{\bern}$ and $\Sigma_{\sib}$ to incorrect choices of $g$, which in practice is unknown as an oracle quantity. Intuitively, an optimal correlated Bernoulli design should be more sensitive to the choice of $g$ than the optimal stratified design, since the former is determined by a single coin flip, whereas the latter involves $n/2$ coin flips. 

Of note, \cite{cytrynbaum2023} similarly cautions against using a plug-in estimate of $g$ from a small pilot study. They argue that this should be strictly worse than simply doing an IID Bern$(1/2)$ treatment assignment and then incorporating the covariates at the estimation stage via a cross-fit AIPW estimator. 

Our first result shows that, if $g$ is incorrectly estimated by some amount, then the variance of $\hat{\tau}_n$ from using $\Sigma_{\bern}$ can be larger than expected by an amount quadratic in $n$, while this inflation can only be linear in $n$ from using $\Sigma_{\text{strat}}$. 
This result is formulated in terms of a worst-case perturbation $h$ of the true $g$ by up to some error $\epsilon$ in each coordinate. 

\begin{theorem}\label{thm:l^infty-result}
Suppose $\Sigma_{\bern}$ and $\Sigma_{\strat}$ are the covariance matrices of the optimal correlated Bernoulli and optimal stratified designs, respectively, for index function $h \in \bbR^n$. Let $B_{\epsilon}^{\infty}(h)$ be the $\ell^{\infty}$ ball around $h$ of radius $\epsilon$. Then
\begin{align*} 
&\underset{g \in B_{\epsilon}^{\infty}(h)}{\sup} \text{ } g^{\top} \Sigma_{\bern} g \geq h^{\top} \Sigma_{\bern} h + n^2 \epsilon^2,\quad\text{and}  \\ 
&\underset{g \in B_{\epsilon}^{\infty}(h)}{\sup} \text{ } g^{\top} \Sigma_{\strat} g = h^{\top} \Sigma_{\strat} h + 2n\epsilon^2 + 4\epsilon(h_{(n)} - h_{(1)}).
\end{align*}
In other words, 
even if all $h_i$ are within $\epsilon$ of $g_i$,
$g^{\top} \Sigma_{\bern}g$ can be larger than expected by a term of order $O(n^2\epsilon^2)$ while $g^{\top} \Sigma_{\strat}g$ is only off by a term of order $O(n\epsilon^2)$. 
\end{theorem}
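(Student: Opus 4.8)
The plan is to treat each of the two claims as the maximization of a convex quadratic form over the $\ell^\infty$-box $B_\epsilon^\infty(h)$, exploiting the special structure of the two covariance matrices (both of which are \emph{fixed}, having been computed from $h$).

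For $\Sigma_{\bern}$: note from \eqref{eq:Sigma_Bern} that, if we set $v\in\bbR^n$ with $v_i=\tfrac12$ for $i\in\mathcal{S}$ and $v_i=-\tfrac12$ for $i\in\mathcal{S}^c$, then $\Sigma_{\bern}=vv^\top$, so $g^\top\Sigma_{\bern}g=(v^\top g)^2$ is the square of a linear functional of $g$. Since we only need a lower bound on a supremum, I would exhibit one good box-vertex: take $g^\ast=h+\epsilon\,s\,\mathrm{sign}(v)$ coordinatewise, where $s=\mathrm{sign}(v^\top h)$ (either sign if $v^\top h=0$). With $\|v\|_1=n/2$ this gives $|v^\top g^\ast|=|v^\top h|+\epsilon n/2$, hence $g^{\ast\top}\Sigma_{\bern}g^\ast=(|v^\top h|+\epsilon n/2)^2\ge (v^\top h)^2+\tfrac14 n^2\epsilon^2=h^\top\Sigma_{\bern}h+\tfrac14 n^2\epsilon^2$, and the supremum is at least this. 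This is exactly where the $n^2\epsilon^2$ order comes from: $\Sigma_{\bern}$ is rank one, and its $\ell^1$ ``lever arm'' $\|v\|_1$ grows linearly in $n$.

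For $\Sigma_{\strat}$: relabel units so $h_1\le\cdots\le h_n$, so that $\Sigma_{\strat}$ is block diagonal with $n/2$ blocks, the $i$-th involving only coordinates $2i-1,2i$, and $g^\top\Sigma_{\strat}g=\tfrac14\sum_{i=1}^{n/2}(g_{2i}-g_{2i-1})^2$. Because both the box constraint and the objective separate over the disjoint pairs, the supremum over $B_\epsilon^\infty(h)$ equals the sum of the per-pair maxima. For pair $i$, $g_{2i}-g_{2i-1}$ ranges over $[\,h_{(2i)}-h_{(2i-1)}-2\epsilon,\ h_{(2i)}-h_{(2i-1)}+2\epsilon\,]$; since $h_{(2i)}\ge h_{(2i-1)}$, the square is maximized at the right endpoint, with value $(h_{(2i)}-h_{(2i-1)}+2\epsilon)^2$, attained at $g_{2i}=h_{(2i)}+\epsilon$, $g_{2i-1}=h_{(2i-1)}-\epsilon$. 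Summing and expanding gives $h^\top\Sigma_{\strat}h+\epsilon\sum_{i=1}^{n/2}(h_{(2i)}-h_{(2i-1)})+\tfrac12 n\epsilon^2$, and the linear term, being a sum of within-pair gaps, is at most $h_{(n)}-h_{(1)}$ (telescoping to it when consecutive pairs abut in value), yielding the stated expression. Crucially, each of the $n/2$ blocks is $2\times2$ and contributes only an $O(\epsilon^2)$ inflation, for a total of $O(n\epsilon^2)$, versus $O(n^2\epsilon^2)$ for the rank-one $\Sigma_{\bern}$.

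The calculations are elementary; there is no real analytic obstacle. The ``hard'' part is purely structural: recognizing the rank-one factorization $\Sigma_{\bern}=vv^\top$ and the block-diagonal separability of $\Sigma_{\strat}$, which make the two box-constrained maximizations immediate and simultaneously explain the $n^2\epsilon^2$-versus-$n\epsilon^2$ dichotomy. The only bookkeeping point to watch is the sign of $v^\top h$ (equivalently, the direction of the knapsack slack $\sum_{i\in\mathcal S}h_i\le\tfrac12\sum_i h_i$), which determines the direction in which the worst-case perturbation should push $g$; this is handled by the $\mathrm{sign}(v^\top h)$ factor in the construction of $g^\ast$.
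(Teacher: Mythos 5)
Your argument is essentially the paper's own proof, just phrased more structurally: the paper also picks the worst vertex of the $\ell^\infty$ box by pushing every coordinate of $h$ in the direction that increases the relevant linear functional --- upward on $\mathcal{S}^c$ and downward on $\mathcal{S}$ for $\Sigma_{\bern}$ (your $g^\ast = h + \epsilon\, s\,\mathrm{sign}(v)$ with $\Sigma_{\bern}=vv^\top$), and upward on $h_{(2i)}$, downward on $h_{(2i-1)}$ within each block for $\Sigma_{\strat}$. The computations coincide, and the separability observations you highlight are exactly what the paper exploits implicitly.

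The one point worth flagging is that your careful tracking of the $\tfrac14$ prefactor in \eqref{eq:Sigma_Bern} exposes a constant inconsistency in the theorem as stated. With $\Sigma_{\bern}$ and $\Sigma_{\strat}$ as defined in the paper (both carry the factor $\tfrac14$, as any $\mathrm{Bern}(1/2)$ covariance must), the correct conclusions are $\sup_{g} g^\top\Sigma_{\bern}g \ge h^\top\Sigma_{\bern}h + \tfrac14 n^2\epsilon^2$ and $\sup_{g} g^\top\Sigma_{\strat}g = h^\top\Sigma_{\strat}h + \tfrac12 n\epsilon^2 + \epsilon\sum_{i=1}^{n/2}\bigl(h_{(2i)}-h_{(2i-1)}\bigr)$, which is what you derive. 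The paper's proof writes $h^\top\Sigma_{\bern}h = \bigl(\sum_{i\in\mathcal{S}^c}h_i-\sum_{i\in\mathcal{S}}h_i\bigr)^2$ and $h^\top\Sigma_{\strat}h = \sum_i (h_{(2i)}-h_{(2i-1)})^2$, silently dropping the $\tfrac14$, which is how it reaches the stated $n^2\epsilon^2$ and $2n\epsilon^2$. You are also right that the linear term in the stratified case is $\epsilon\sum_i(h_{(2i)}-h_{(2i-1)})$, which is only bounded above by $\epsilon(h_{(n)}-h_{(1)})$; the paper's claimed equality with $4\epsilon(h_{(n)}-h_{(1)})$ should be an upper bound (or be restated with the sum of within-pair gaps). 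None of this affects the $O(n^2\epsilon^2)$-versus-$O(n\epsilon^2)$ dichotomy, which is the substance of the theorem and which your proof establishes correctly.
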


Theorem \ref{thm:l^infty-result} shows that we should be wary of using $\Sigma_{\bern}$ from some inaccurate $h$, since the limited randomness can lead to an overconfident design. The extra term $h_{(n)} - h_{(1)}$ is likely to scale at a slower than linear rate; for instance, it is $O(1)$ if the distribution of $h$ is bounded and $o_p(\sqrt{n})$ if the distribution of $h$ has finite variance (see Lemma \ref{lemma:max-o_p} in Appendix \ref{sec:proofs}). The result is the exact same for $\Sigma_{\sib}$, since the proof of Theorem \ref{thm:l^infty-result} does not depend on $|\mathcal{S}|$, so enforcing $|\mathcal{S}| = n/2$ does not fix this problem. 

A natural follow-up question is whether there is a correlated Bernoulli design that is more robust to this misestimation. Specifically, we would like a design that improves on the optimal stratified design for the true $g$ while getting closer to the robustness of the optimal stratified design in the sense of Theorem \ref{thm:l^infty-result}. 

We consider the following class of hybrid designs as an interpolation. First, permute indices so that $g$ is sorted from smallest to largest. Next, partition $\{1, 2, \ldots, n\}$ into $G = \floor{n^{\alpha}}$ contiguous groups of indices for some $\alpha \in (0, 1)$. Each group will have roughly $n^{1 - \alpha}$ indices, but add or remove one index from groups as needed to make all group cardinalities even. Then, run the optimal correlated Bernoulli design independently in each group. The resulting covariance matrix $\Sigma_{\mathrm{hybrid}}$ will be block diagonal with $G$ blocks, each of rank one and (up to permutation within the group) of the form \eqref{eq:Sigma_Bern}. This requires solving $G$ separate knapsack problems, but each is of much lower time complexity. To adapt this to the shift-invariant design, one need only solve the balanced partitioning problem \eqref{eq:balanced-knapsack} instead of the knapsack problem for each group. This procedure is summarized in Algorithm \ref{alg:hybrid-design}. 

\begin{algorithm}[t] 
\caption{Construction of hybrid designs}\label{alg:hybrid-design}
\begin{algorithmic}[1]
\STATE{ \textbf{Input:} even sample size $n > 0$, interpolation parameter $\alpha \in (0, 1)$, oracle vector $g \in \bbR^n$} 
\STATE{ \textbf{Sort} $g$ so that $g_1 \leq g_2 \leq \ldots \leq g_n$}
\STATE{ \textbf{Compute} $G = \floor{n^{\alpha}}$, $k = 2 \floor{n/(2G)}$, and $r = (n - kG)/2$}
\STATE{ \textbf{Partition} $\{1, 2, \ldots, n\}$ into $G$ contiguous subgroups $\{\mathcal{G}_1, \ldots \mathcal{G}_G\}$ with 
$$|\mathcal{G}_{\ell}| = \begin{cases} k + 2, & \ell \leq r \\ 
                                  k, & \ell > r 
                    \end{cases}$$
}
\FOR{$\ell$ in $\{1, 2, \ldots, G$\}}
\STATE{ \textbf{Define} $g_{\mathcal{G}_{\ell}} \in \bbR^{|\mathcal{G}_{\ell}|}$ to be the subvector of $g$ with indices in $\mathcal{G}_{\ell}$} 
\STATE{ \textbf{Solve} the knapsack problem \eqref{eq:knapsack} or the balanced number partitioning problem \eqref{eq:balanced-knapsack} to obtain $\mathcal{S}_{\mathcal{G}_{\ell}}$, $\mathcal{S}^c_{\mathcal{G}_{\ell}}$}
\STATE{ \textbf{Sample} $Z_{\ell} \sim \bern(1/2)$}
\STATE{ \textbf{Define} $Z_{\mathcal{G}_{\ell}} \in \{0, 1\}^{|\mathcal{G}_{\ell}|}$ by 
$$Z_{G_{\ell}, i} = \begin{cases} 
                    Z_{\ell}, & i \in \mathcal{S}_{\mathcal{G}_{\ell}} \\ 
                    1 - Z_{\ell}, & i \in \mathcal{S}_{\mathcal{G}_{\ell}}^c
              \end{cases}$$
}
\ENDFOR
\STATE{\textbf{Output:} $Z = (Z_{\mathcal{G}_1}, Z_{\mathcal{G}_2}, \ldots, Z_{\mathcal{G}_G}) \in \{0, 1\}^n$}
\end{algorithmic}    

\smallskip

N.B.: at step 4, we can use any $G$ contiguous subgroups
of which $r$ have cardinality $k+2$ and $G-r$ have 
cardinality $k$.
\end{algorithm}

For both the optimal correlated Bernoulli design and its shift-invariant version, any hybrid design will still yield a lower variance than the optimal stratified design when using the true $g$. To see this, note that $g^{\top} \Sigma_{\mathrm{hybrid}} g$ is additive over the blocks of $\Sigma_{\mathrm{hybrid}}$. By the proof of Theorem \ref{thm:invariant-wins}, the quadratic form restricted to each block is no larger than the corresponding quadratic form of the optimal stratified design restricted to that block. Here we use that each group has even cardinality and the groups are contiguous partitions of the sorted $g$, so we can be sure that no matched pair in the optimal stratified design is split across two separate groups. 

To assess the worst-case variance inflation under misestimation of $g$ (in the $\ell^{\infty}$ sense of Theorem \ref{thm:l^infty-result}), it suffices to consider the inflation separately in each group since $g^{\top} \Sigma_{\mathrm{hybrid}} g$ is additive across groups. Following the steps in the proof of Theorem \ref{thm:l^infty-result}, the worst-case variance inflation will be of order $n^{2 (1 - \alpha)} \epsilon^2$ in each group. The worst-case overall variance inflation will then be of order $n^{2(1 - \alpha)} n^{\alpha} \epsilon^2 = n^{2 - \alpha} \epsilon^2$. Hence, the robustness increases with the number of groups, approaching the quadratic inflation of the optimal correlated Bernoulli design as $\alpha \to 0$ and the linear inflation of the optimal stratified design as $\alpha \to 1$. These designs thus naturally interpolate between the two extremes. 

\section{Asymptotic behavior}\label{sec:asymptotics}
In this section, we present some results about the asymptotic behavior of $\hat{\tau}_n$ under the optimal correlated Bernoulli design. Here, the target of estimation shifts from $\tau_n$ to $\tau = \bbE[Y_i(1) - Y_i(0)]$, the population average treatment effect that is not conditional on the observed $X$.

We focus on the shift-invariant case in which $n/2$ units receive treatment, so we assume that $n$ is even and interpret limits to be along this subsequence. We state our results in terms of a given choice of known index function $h \in \bbR^n$, which need not be the true $g$. However, the results require some degree of accuracy in capturing the true $g$, which we make more precise below. Throughout, we let $\mathcal{S}_h$ and $\mathcal{S}_h^c$ be the solutions to the balanced partitioning problem \eqref{eq:balanced-knapsack} obtained using $h$. 

Before proceeding, we state several assumptions about the distribution of $\{Y_i(1), Y_i(0), X_i\}$ and the subset $\mathcal{S}_h$ that we will need. 

\begin{assumption} \label{asymptotic-assumptions} 
\text{} \\ 
(a) $\bbE[Y_i^2(z)] < \infty$ for $z \in \{0, 1\}$. \\ 
(b) $0 < \bbE[\Var(Y_i(z) \st X_i)]$ for $z \in \{0, 1\}$. \\ 
(c) For $z \in \{0, 1\}$, 
\begin{equation} \label{eq:assumption3} 
    \frac{1}{n} \Biggl|\sum_{i \in \mathcal{S}_h} \Var(Y_i(z) \st X_i) - \sum_{i \in \mathcal{S}_h^c} \Var(Y_i(z) \st X_i) \Biggr| \overset{p}{\to} 0. 
\end{equation}
(d) With $g(X_i) = \bbE[Y_i(1) + Y_i(0) \st X_i]$, 
\begin{equation} \label{eq:assumption4} 
    \frac{1}{\sqrt{n}} \Biggl|\sum_{i \in \mathcal{S}_h} g(X_i) - \sum_{i \in \mathcal{S}_h^c} g(X_i) \Biggr| \overset{p}{\to} 0. 
\end{equation}
\end{assumption}
The first condition is a standard regularity condition on the distribution of the potential outcomes, and the second ensures that there is still some variability left in the potential outcomes after conditioning on the covariates. The third and fourth conditions are more unusual and state that the first and second (conditional) moments of the potential outcomes take on similar averages in the subsets $\mathcal{S}_h$ and $\mathcal{S}_h^c$ asymptotically. The condition in \eqref{eq:assumption3} holds, for instance, if $\sigma^2_1(X_i)$ and $\sigma^2_0(X_i)$ are homoscedastic. In addition, it is guaranteed to hold under any covariate-agnostic choice of subsets such as $\mathcal{S}_h = \{1, 3, 5, \ldots, n - 1\}$ and $\mathcal{S}_h^c = \{2, 4, 6, \ldots, n\}$ by the law of large numbers. 

To assess the reasonableness of \eqref{eq:assumption4}, we note that it holds when $h = g$ and the distribution of $g(X_i) = \bbE[Y_i(1) + Y_i(0) \st X_i]$ has finite variance. To see this, first assume that the distribution of $g(X_i)$ is bounded, so $|g(X_i)| \leq M$ almost surely. Consider the following sequential construction of a balanced partition: for each new pair $g_{2k - 1}$ and $g_{2k}$, add the larger one to whichever of $\mathcal{S}$ and $\mathcal{S}^c$ has a smaller sum, and vice-versa. This will keep the gap between the sums in $[-2M, 2M]$. In the unbounded case, the same construction gives a gap that is in $[-2g_{(n)}, 2g_{(n)}]$. Since the maximum of $n$ IID draws from a finite variance distribution is $o_p(\sqrt{n})$ (see Lemma \ref{lemma:max-o_p} in Appendix \ref{sec:proofs}), \ref{eq:assumption4} then holds.  

The various conditions in Assumption \ref{asymptotic-assumptions} are very similar to those used in \cite{bai-romano2022, bai-jiang2024, bai-liu2024}, which study inference for matched-pair and matched-tuple designs. In place of \eqref{eq:assumption3} and \eqref{eq:assumption4}, they assume that $\bbE[Y_i^r(z) \st X_i]$ is Lipschitz for $z \in \{0, 1\}$ and $r \in \{1, 2\}$ and that the average within-pair $\ell^1$ and $\ell^2$ covariate distances converge to zero as $n \to \infty$. 

Under our stated assumptions, the next theorem establishes the asymptotic normality of $\hat{\tau}_n$. We write $\mu_1 = \bbE[Y_i(1)]$, $\mu_0 = \bbE[Y_i(0)]$, $\sigma_1^2 = \Var(Y_i(1))$, and $\sigma_0^2 = \Var(Y_i(0))$ for the population-level moments of the potential outcomes. 

\begin{theorem} \label{thm:CLT}
Suppose that $Z$ is sampled from an optimal correlated shift-invariant Bernoulli design and all conditions in Assumption \ref{asymptotic-assumptions} are satisfied. Then 
$$\sqrt{n}(\hat{\tau}_n - \tau) \tod N(0, \nu^2),$$
where 
\begin{align} \label{eq:nu^2-definition}
\begin{split}
\nu^2 &= 2\bbE[\sigma_1^2(X_i) + \sigma_0^2(X_i)] + \bbE[(\tau(X_i) - \tau)^2] \\ 
&= 2\sigma_1^2 + 2\sigma_0^2 - \bbE[(g(X_i) - \bbE[g(X_i)])^2].
\end{split}
\end{align}
\end{theorem}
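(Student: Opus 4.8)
The plan is to split $\sqrt{n}(\hat\tau_n-\tau)$ into a within-unit stochastic part, an ex-post-bias part, and a pure-covariate part, show the second is asymptotically negligible, and establish a joint CLT for the other two. Under the shift-invariant design one draws a single $Z^*\sim\bern(1/2)$ and sets $Z_i=Z^*$ on $\mathcal{S}_h$ and $Z_i=1-Z^*$ on $\mathcal{S}_h^c$, where $|\mathcal{S}_h|=|\mathcal{S}_h^c|=n/2$. Writing $U_i=Y_i(1)-\mu_1(X_i)$, $V_i=Y_i(0)-\mu_0(X_i)$ and using $Y_iZ_i=Y_i(1)Z_i$, $Y_i(1-Z_i)=Y_i(0)(1-Z_i)$, I would write
$$\sqrt{n}(\hat\tau_n-\tau)=R_n+\sqrt{n}\bigl(\bbE[\hat\tau_n\st X,Z]-\tau_n\bigr)+T_n,\qquad R_n:=\frac{2}{\sqrt{n}}\sum_{i=1}^n\bigl(U_iZ_i-V_i(1-Z_i)\bigr),\quad T_n:=\frac{1}{\sqrt{n}}\sum_{i=1}^n\bigl(\tau(X_i)-\tau\bigr),$$
which holds because $\hat\tau_n-\bbE[\hat\tau_n\st X,Z]=\tfrac2n\sum_i(U_iZ_i-V_i(1-Z_i))$ and $\tau_n-\tau=\tfrac1n\sum_i(\tau(X_i)-\tau)$.

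The ex-post-bias term I would dispatch exactly as in the computation behind Proposition~\ref{prop:noexpostbias}: using $\mu_1(X_i)+\mu_0(X_i)=g(X_i)$ and $|\mathcal{S}_h|=n/2$, one gets $\bbE[\hat\tau_n\st X,Z]-\tau_n=\pm\tfrac1n\bigl(\sum_{i\in\mathcal{S}_h}g(X_i)-\sum_{i\in\mathcal{S}_h^c}g(X_i)\bigr)$ with the sign set by $Z^*$, so $\sqrt{n}$ times it is $o_p(1)$ by Assumption~\ref{asymptotic-assumptions}(d). Meanwhile $T_n\tod N\bigl(0,\bbE[(\tau(X_i)-\tau)^2]\bigr)$ by the classical CLT, the variance being finite because $\bbE[\mu_z(X_i)^2]\le\bbE[Y_i(z)^2]<\infty$ under Assumption~\ref{asymptotic-assumptions}(a).

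The heart of the proof is a CLT for $R_n$, for which I would condition on $(X,Z^*)$. Given $(X,Z^*)$ the vector $Z$ is fixed and $R_n$ is a sum of $n$ independent mean-zero summands (the $i$th being $\tfrac2{\sqrt n}U_i$ if $Z_i=1$ and $-\tfrac2{\sqrt n}V_i$ if $Z_i=0$), since the triples $(Y_i(1),Y_i(0),X_i)$ are i.i.d.\ across $i$ and $\mathcal{S}_h$ is $X$-measurable. Its conditional variance is $\tfrac4n\sum_{i:Z_i=1}\sigma_1^2(X_i)+\tfrac4n\sum_{i:Z_i=0}\sigma_0^2(X_i)$; since $\{i:Z_i=1\}$ and $\{i:Z_i=0\}$ are $\mathcal{S}_h$ and $\mathcal{S}_h^c$ in one order or the other, each of size $n/2$, and $\tfrac1n\sum_{i\in\mathcal{S}_h}\sigma_z^2(X_i)=\tfrac12\bigl(\tfrac1n\sum_{i=1}^n\sigma_z^2(X_i)+\tfrac1n[\sum_{i\in\mathcal{S}_h}\sigma_z^2(X_i)-\sum_{i\in\mathcal{S}_h^c}\sigma_z^2(X_i)]\bigr)$, the law of large numbers and Assumption~\ref{asymptotic-assumptions}(c) give the conditional variance $\overset{p}{\to}\nu_R^2:=2\bbE[\sigma_1^2(X_i)+\sigma_0^2(X_i)]$ for either value of $Z^*$ (positive by Assumption~\ref{asymptotic-assumptions}(b)). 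I would then check the conditional Lindeberg condition in probability: for $\epsilon>0$ its left side is at most $\tfrac4n\sum_{i=1}^n\bigl(\bbE[U_i^2\indic\{|U_i|>\epsilon\sqrt n/2\}\st X_i]+\bbE[V_i^2\indic\{|V_i|>\epsilon\sqrt n/2\}\st X_i]\bigr)$, a nonnegative quantity whose expectation $4\bbE[U_1^2\indic\{|U_1|>\epsilon\sqrt n/2\}]+4\bbE[V_1^2\indic\{|V_1|>\epsilon\sqrt n/2\}]$ tends to $0$ by dominated convergence under Assumption~\ref{asymptotic-assumptions}(a). The Lindeberg--Feller theorem then yields $\bbE[e^{\mathrm{i}sR_n}\st X,Z^*]\overset{p}{\to}e^{-s^2\nu_R^2/2}$ for each $s\in\bbR$.

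To combine, since $T_n$ is $(X,Z^*)$-measurable I would write $\bbE[e^{\mathrm{i}sR_n+\mathrm{i}tT_n}]=\bbE\bigl[e^{\mathrm{i}tT_n}\,\bbE[e^{\mathrm{i}sR_n}\st X,Z^*]\bigr]\to e^{-s^2\nu_R^2/2}\bbE[e^{\mathrm{i}tT_n}]\to e^{-s^2\nu_R^2/2}e^{-t^2\bbE[(\tau(X_i)-\tau)^2]/2}$, using bounded convergence (the integrand is bounded and $\bbE[e^{\mathrm{i}sR_n}\st X,Z^*]-e^{-s^2\nu_R^2/2}\overset{p}{\to}0$) and then the CLT for $T_n$. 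Taking $s=t$ gives $R_n+T_n\tod N(0,\nu_R^2+\bbE[(\tau(X_i)-\tau)^2])$, and Slutsky with the $o_p(1)$ ex-post-bias term gives $\sqrt{n}(\hat\tau_n-\tau)\tod N(0,\nu^2)$ with the first expression in \eqref{eq:nu^2-definition}. Its equality with the second follows from the law of total variance: $2\sigma_1^2+2\sigma_0^2=2\bbE[\sigma_1^2(X_i)+\sigma_0^2(X_i)]+2\Var(\mu_1(X_i))+2\Var(\mu_0(X_i))$ and $2\Var(\mu_1(X_i))+2\Var(\mu_0(X_i))=\Var(g(X_i))+\Var(\tau(X_i))$. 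The main obstacle is the $R_n$ step: because $\mathcal{S}_h$ is an intricate $X$-measurable object one cannot argue directly, so one conditions on $(X,Z^*)$ to recover independence, verifies Lindeberg \emph{in probability}, and then transfers back to the joint statement through characteristic functions — with Assumptions~\ref{asymptotic-assumptions}(c) and (d) doing exactly the work of annihilating the two $X$-dependent imbalance terms.
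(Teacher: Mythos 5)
Your proposal is correct and follows essentially the same route as the paper's proof: the same decomposition into a conditionally independent stochastic part (handled by a conditional Lindeberg CLT whose limiting variance is pinned down by the law of large numbers together with Assumption~\ref{asymptotic-assumptions}(c)), a pure-covariate part handled by the classical CLT, and a design-imbalance (ex-post bias) term annihilated by Assumption~\ref{asymptotic-assumptions}(d). The only difference is bookkeeping in how the pieces are combined --- you use conditional characteristic functions and bounded convergence where the paper uses a subsequencing argument on conditional CDFs --- and both are standard and equivalent here.
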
 
The proof of this result is very similar to that of Lemma S.1.4 in \cite{bai-romano2022}. The formula for the asymptotic variance is in fact the exact same, though we have an extra factor of two in our definition of $\nu^2$ because we divide by $n$ instead of $2n$; we have $n$ observations where they have $n$ pairs of observations. Hence, both designs yield the same asymptotic performance, so the improvements over the optimal stratified design are purely a finite sample phenomenon. 

A natural follow-up question is how to consistently estimate $\nu^2$ in Theorem \ref{thm:CLT} to construct a confidence interval for $\tau$. To do so, we will first need a slightly different version of conditions (c) and (d) of Assumption \ref{asymptotic-assumptions}. 

\begin{assumption} \label{asymptotic-assumptions2} 
For $r \in \{1, 2\}$ and $z \in \{0, 1\}$, 
\begin{equation*} 
    \frac{1}{n} \Biggl|\sum_{i \in \mathcal{S}_h} \bbE[Y_i^r(z) \st X_i] - \sum_{i \in \mathcal{S}_h^c} \bbE[Y_i^r(z) \st X_i] \Biggr| \overset{p}{\to} 0. 
\end{equation*}
\end{assumption}

In Lemma \ref{lemma:moment-consistency} in Appendix \ref{sec:proofs}, we show that we can then consistently estimate the moments $\bbE[Y_i^r(z)]$ for $r \in \{1, 2\}$ and $z \in \{0, 1\}$ via their sample analogs. This implies that we can estimate the means $\mu_1$ and $\mu_0$ and variances $\sigma_1^2$ and $\sigma_0^2$ consistently. As a result, we could always overestimate $\nu^2$ with $2\hat{\sigma}_1^2 + 2\hat{\sigma}_0^2$ and get a confidence interval that is asymptotically conservative. 

However, the terms in $\nu^2$ that involve variances of conditional expectations or expectations of conditional variances are trickier to estimate. To do so, we need to be able to isolate the behavior at or near a particular $X_i$ more effectively. In light of these considerations, we again adapt an idea from \cite{bai-romano2022}, who are able to estimate $\nu^2$ by considering sample quantities defined using matched-pairs, such as the average squared difference in within-pair outcomes. Since exactly one unit in each pair is treated, and since pairs have similar covariates asymptotically by assumption, this provides additional localized information for estimating $\nu^2$. 

Our design is not based on matched pairs, but we now state an assumption that will allow us to leverage these ideas. To do so, we let $\pi$ be the permutation of $\{1, 2, \ldots, n\}$ so that $\{\pi(1), \pi(2), \ldots, \pi(n/2)\}$ are the indices in $\mathcal{S}_h$ and 
\begin{align*} &h(X_{\pi(1)}) \leq h(X_{\pi(2)}) \leq \ldots \leq h(X_{\pi(n/2)}), \\ 
&h(X_{\pi(n/2 + 1)}) \leq h(X_{\pi(n/2 + 2)}) \leq \ldots \leq h(X_{\pi(n)}).\end{align*} 
That is, $\pi$ sorts the units so that the first $n/2$ are in $\mathcal{S}_h$, the last $n/2$ are in $\mathcal{S}_h^c$, and the units within each half are sorted by $h(X_i)$. 

\begin{assumption} \label{pairing-assumption}
With $\pi$ defined as above, 
\begin{equation} 
\frac{1}{n} \sum_{i = 1}^{n/2} \bigl(\mu_z(X_{\pi(i)}) - \mu_z(X_{\pi(i + n/2)})\bigr)^2 \overset{p}{\to} 0
\end{equation}
for $z \in \{0, 1\}$. 
\end{assumption}

\begin{assumption} \label{pairs-of-pairs-assumption}
With $\pi$ defined as above, 
\begin{equation} 
\frac{1}{n} \sum_{i = 1}^{n/2} \bigl(\mu_z(X_{\pi(2i)}) - \mu_z(X_{\pi(2i -1 )}\bigr)^2
\end{equation}
for $z \in \{0, 1\}$. 
\end{assumption}

Assumptions \ref{pairing-assumption} and \ref{pairs-of-pairs-assumption} create an analog of matched pairs in our setting that will allow us to estimate $\nu^2$ consistently. They state respectively that the conditional mean functions become sufficiently close (1) within a pair and (2) for units in the same partition within adjacent pairs. 

With these assumptions, we can derive a consistent estimator $\hat{\nu}^2$ for $\nu^2$, which we state as the following theorem.

\begin{theorem} \label{thm:consistency}
Suppose that Assumptions \ref{asymptotic-assumptions}, \ref{asymptotic-assumptions2}, \ref{pairing-assumption}, and \ref{pairs-of-pairs-assumption} hold. Let
\begin{align*} \label{eq:nu^2-definition}
\begin{split}
\hat{a}_n^2 &= \frac{2}{n} \sum_{i = 1}^{n/2} (Y_{\pi(i)} - Y_{\pi(i + n/2)})^2, \\ 
\hat{b}_n^2 &= \frac{4}{n} \sum_{i = 1}^{n/2} Y_{\pi(2i)} Y_{\pi(2i - 1)} - \frac{4}{n} \sum_{i = 1}^{n/2} Y_{\pi(i)} Y_{\pi(i + n/2)}. \\
\end{split}
\end{align*} 
Further, let $\hat{\nu}^2 = 2\hat{a}_n^2 - (\hat{b}_n^2 + \hat{\tau}_n^2)$. For $\nu^2$ given in~\eqref{eq:nu^2-definition}, 
$$\hat{\nu}^2 \overset{p}{\to} \nu^2\quad\text{and}\quad\frac{\sqrt{n}(\hat{\tau}_n - \tau)}{\hat{\nu}} \tod N(0, 1).$$
\end{theorem}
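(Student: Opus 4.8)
There are two claims: the consistency $\hat\nu^2\overset{p}{\to}\nu^2$, which is where the work is, and the studentized CLT, which is then a one-liner. For the latter, Theorem~\ref{thm:CLT} already gives $\sqrt n(\hat\tau_n-\tau)\tod N(0,\nu^2)$ and in particular $\hat\tau_n\overset{p}{\to}\tau$, hence $\hat\tau_n^2\overset{p}{\to}\tau^2$; once $\hat\nu^2\overset{p}{\to}\nu^2$ is known, $\nu^2\ge 2\bbE[\sigma_1^2(X_i)+\sigma_0^2(X_i)]>0$ by Assumption~\ref{asymptotic-assumptions}(b), so $\hat\nu\overset{p}{\to}\nu>0$ and Slutsky's theorem yields $\sqrt n(\hat\tau_n-\tau)/\hat\nu\tod N(0,1)$. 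Since $\hat\nu^2=(2\hat a_n^2-\hat b_n^2)-\hat\tau_n^2$, the entire proof reduces to showing that $2\hat a_n^2-\hat b_n^2\overset{p}{\to}\nu^2+\tau^2$.

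To handle $2\hat a_n^2-\hat b_n^2$, I would condition on the single design coin $Z^*\sim\bern(1/2)$. By the symmetry that interchanges $\mathcal{S}_h\leftrightarrow\mathcal{S}_h^c$ and treatment $1\leftrightarrow 0$ — under which $\hat a_n^2$ and $\hat b_n^2$ are unchanged and the proposed limit is invariant — it suffices to work on $\{Z^*=1\}$, where every unit in $\mathcal{S}_h$ is observed under treatment and every unit in $\mathcal{S}_h^c$ under control. Replacing each observed $Y$ by the appropriate potential outcome, expanding the square in $\hat a_n^2$ and the products in $\hat b_n^2$, and combining the two cross terms into a single $-\tfrac4n\sum_{i\le n/2}Y_{\pi(i)}(1)Y_{\pi(i+n/2)}(0)$, the quantity $2\hat a_n^2-\hat b_n^2$ becomes a fixed linear combination of four blocks: the within-partition squared-outcome averages $\tfrac2n\sum_{j\in\mathcal{S}_h}Y_j^2(1)$ and $\tfrac2n\sum_{j\in\mathcal{S}_h^c}Y_j^2(0)$; the cross-partition matched-pair product $\tfrac2n\sum_{i\le n/2}Y_{\pi(i)}(1)Y_{\pi(i+n/2)}(0)$; and the within-partition adjacent-pair products $\tfrac4n\sum_{i\le n/2}Y_{\pi(2i)}Y_{\pi(2i-1)}$, which — because $\pi$ lists all of $\mathcal{S}_h$ before all of $\mathcal{S}_h^c$ — splits into a $Y(1)Y(1)$ sum over adjacent pairs inside $\mathcal{S}_h$ and a $Y(0)Y(0)$ sum over adjacent pairs inside $\mathcal{S}_h^c$, plus at most one straddling pair contributing $O(1/n)$.

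For each block I would (i) subtract the conditional mean of every factor (so $Y_j(z)\mapsto\mu_z(X_j)$), showing that the leftover centered array — which is row-wise independent given $X$ — averages to $0$ in probability by a weak law for triangular arrays; since only second moments are available (Assumption~\ref{asymptotic-assumptions}(a)) and products such as $Y_j(z)Y_k(z')$ may have infinite variance, this requires a truncation argument rather than an $L^2$ bound. Then (ii) I would use the pairing Assumptions~\ref{pairing-assumption} and~\ref{pairs-of-pairs-assumption}, together with Cauchy--Schwarz and the boundedness in probability of $\tfrac1n\sum_{i=1}^n\mu_z^2(X_i)$, to replace the covariate of each unit in a matched pair by that of its partner, so every pair product turns into a single-unit product $\mu_z(X_j)\mu_{z'}(X_j)$. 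Finally (iii) I would pass each $\tfrac2n\sum_{j\in\mathcal{S}_h}(\cdot)$ or $\tfrac2n\sum_{j\in\mathcal{S}_h^c}(\cdot)$ to $\bbE[(\cdot)]$ using the ordinary law of large numbers for $\tfrac1n\sum_{i=1}^n(\cdot)$ together with asymptotic balance across $\mathcal{S}_h$ and $\mathcal{S}_h^c$ — supplied by Assumptions~\ref{asymptotic-assumptions}(c) and~\ref{asymptotic-assumptions2} for conditional second moments, and by the pairing assumptions for products of the $\mu_z$'s (within-pair covariate coincidence forces any such product to be asymptotically balanced, since the pairs exhaust $\mathcal{S}_h$ and $\mathcal{S}_h^c$ bijectively). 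Collecting the four limits gives
\[
2\hat a_n^2-\hat b_n^2\overset{p}{\to}2\bbE[Y_i^2(1)]+2\bbE[Y_i^2(0)]-2\bbE[\mu_1(X_i)\mu_0(X_i)]-\bbE[\mu_1^2(X_i)]-\bbE[\mu_0^2(X_i)],
\]
and since $-2\bbE[\mu_1(X_i)\mu_0(X_i)]-\bbE[\mu_1^2(X_i)]-\bbE[\mu_0^2(X_i)]=-\bbE[g(X_i)^2]$, while $\bbE[Y_i^2(z)]=\sigma_z^2+\mu_z^2$ and $\bbE[g(X_i)^2]=\Var(g(X_i))+(\mu_1+\mu_0)^2$, the right-hand side equals $2\sigma_1^2+2\sigma_0^2-\Var(g(X_i))+(\mu_1-\mu_0)^2=\nu^2+\tau^2$ by the second form of $\nu^2$ in~\eqref{eq:nu^2-definition}. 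Hence $\hat\nu^2\overset{p}{\to}\nu^2$, and the Slutsky step in the first paragraph completes the proof.

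The main obstacle is the pair-product bookkeeping in steps (i)--(iii): one must track exactly which products of potential outcomes survive after the cross-term cancellation, apply the pairing assumptions to precisely those products, and — crucially — prove the weak law for the centered noise terms using only the second-moment assumption, which forces a truncation argument since products of potential outcomes need not be square-integrable. Everything else (checking the $\{Z^*=0\}$ branch gives the same limit, that the $O(1/n)$ straddling pair is negligible, and the closing algebraic identity with $\nu^2$) is routine.
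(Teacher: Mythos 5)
Your proposal is correct and follows essentially the same route as the paper: both reduce the problem to computing the probability limits of $\hat{a}_n^2$, $\hat{b}_n^2$, and $\hat{\tau}_n^2$, use Assumptions~\ref{asymptotic-assumptions2}--\ref{pairs-of-pairs-assumption} to convert the conditional expectations of the pair products into single-unit products of the $\mu_z$'s, handle the fluctuations about these conditional means with a weak law, and finish with the same algebraic identity $2\hat a_n^2-\hat b_n^2\overset{p}{\to}\nu^2+\tau^2$ plus Slutsky. The only differences are organizational (you condition on $Z^*$ and treat four blocks at once where the paper states three separate lemmas, deferring the centered weak-law and truncation step to Lemmas S.1.5--S.1.7 of \cite{bai-romano2022}), and you are in fact slightly more careful than the paper about the one adjacent pair that can straddle $\mathcal{S}_h$ and $\mathcal{S}_h^c$.
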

The estimator $\hat{\nu}^2$ utilizes information about outcomes within the same pair and adjacent pairs. This is similar to the adjusted t-test estimator in \cite{bai-romano2022}, with the formula for $\hat{b}_n^2$ adapted to fit our correlated Bernoulli design. The proof of Theorem \ref{thm:consistency} essentially involves showing that our assumptions are sufficient for the same argument used in their Theorem 3.3. We note that the idea to use adjacent pairs to estimate components of the variance also appears in \cite{abadie2008}. 

\section{Simulation results}\label{sec:sims}

\begin{figure}[t]
\centering
\includegraphics[width=0.8\textwidth]{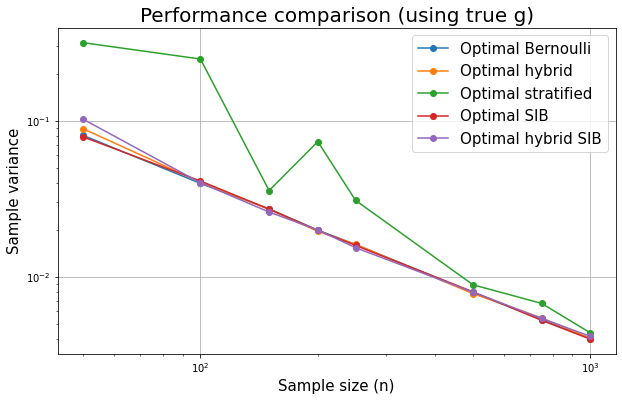}
\caption{Comparison of designs when computed with the true $g$. Designs shown are the optimal correlated Bernoulli design, optimal hybrid Bernoulli design with $\floor{\sqrt{n}}$ groups, optimal shift-invariant Bernoulli (SIB) design, optimal hybrid SIB design with $\floor{\sqrt{n}}$ groups, and optimal stratified design. Both sample size and sample variance are plotted on the log scale for ease of visibility, with sample sizes of $n \in \{50, 100, 150, 200, 250, 500, 750, 1000\}$.}
\label{fig:noiseless}
\end{figure}

In this section, we present a simulation example to compare the performances of the various designs discussed thus far. Our simulation setup is as follows. For given sample size $n$, we generate $X$ and the conditional mean functions $\mu_1(X)$ and $\mu_0(X)$ via 
\begin{align*} 
&X \sim N\left(\begin{bmatrix} 0 \\ 0 \end{bmatrix}, \begin{bmatrix} 10 & 5 \\ 5 & 10 \end{bmatrix} \right), \\
&\mu_1(X) = X_1^2 - 3 |X_2|^{3/2}, \\ 
&\mu_0(X) = -2 |X_2|^{3/2}. 
\end{align*}
We then take $g(X) = \mu_1(X) + \mu_0(X)$. After generating $Z$ from a given design, we add Gaussian noise to the conditional mean function to generate $Y$, i.e., 
$$Y_i = \mu_{Z_i}(X_i) + N(0, \sigma_Y^2).$$
Note that this implies $\sigma_1^2(X) = \sigma_0^2(X) = \sigma_Y^2$ is constant in both $X$ and $Z$; since the optimal design does not depend on these variances at all, as seen in Lemma \ref{lemma:equiv-objective}, we opt for homoscedastic noise for simplicity. Throughout, we take $\sigma_Y^2 = 1$, giving a fundamental variance lower bound of $1/n$. 

We begin by comparing performance of the designs when $g(X)$ is known exactly. To do so, we sample $X$ as above once for a given $n$, then run each design $10{,}000$ times and compute the sample variance of the resulting $\hat{\tau}_n$ estimator for each design. Since the target variance is conditional on $X$, each simulation uses the same $X$ (and thus $g$) but still has randomness in $Z$ and $Y$. 

We use the Python package PuLP \citep{PuLP} with the mixed-integer programming solver Gurobi \citep{Gurobi} to solve the necessary knapsack and balanced partitioning problems. Of course, this is not guaranteed to produce the optimal solution, but we find it both quite fast and quite performant in terms of producing a small knapsack difference. 

\begin{figure}[t]
    \begin{subfigure}[b]{0.32\textwidth}
        \centering
        \includegraphics[width=\textwidth]{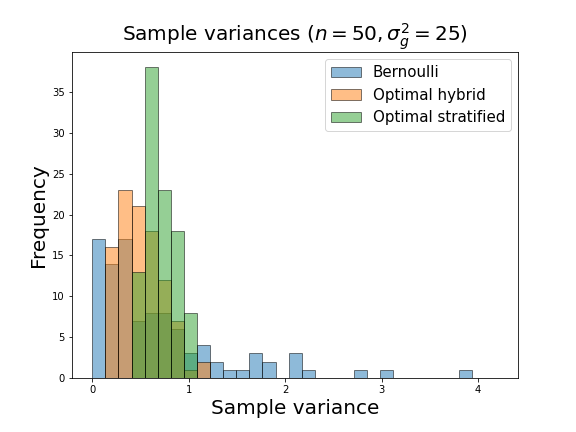}
    \end{subfigure}
    \hfill
    \begin{subfigure}[b]{0.32\textwidth}
        \centering
        \includegraphics[width=\textwidth]{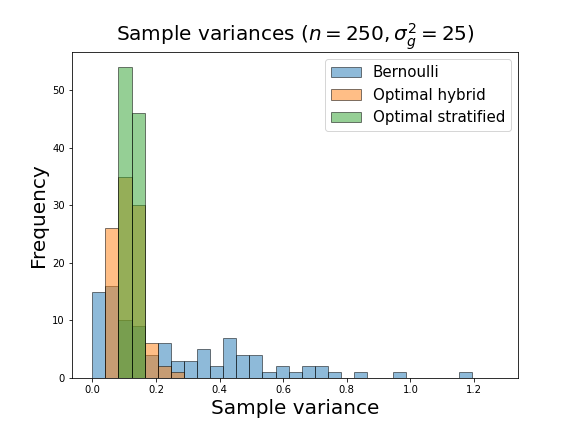}
    \end{subfigure}
    \hfill
    \begin{subfigure}[b]{0.32\textwidth}
        \centering
        \includegraphics[width=\textwidth]{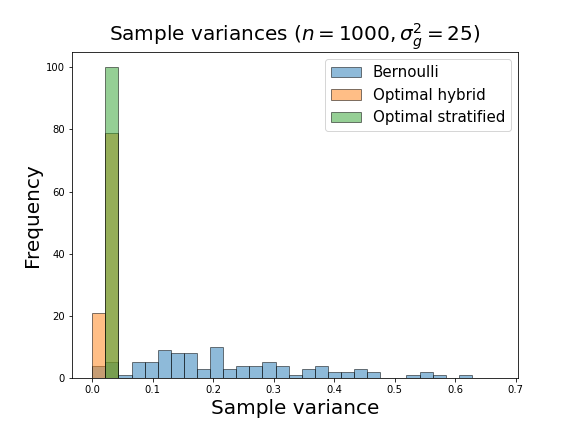}
    \end{subfigure} 

    \begin{subfigure}[b]{0.32\textwidth}
        \centering
        \includegraphics[width=\textwidth]{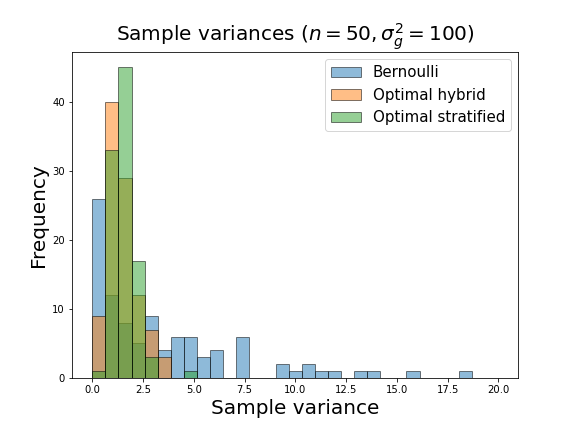}
    \end{subfigure}
    \hfill
    \begin{subfigure}[b]{0.32\textwidth}
        \centering
        \includegraphics[width=\textwidth]{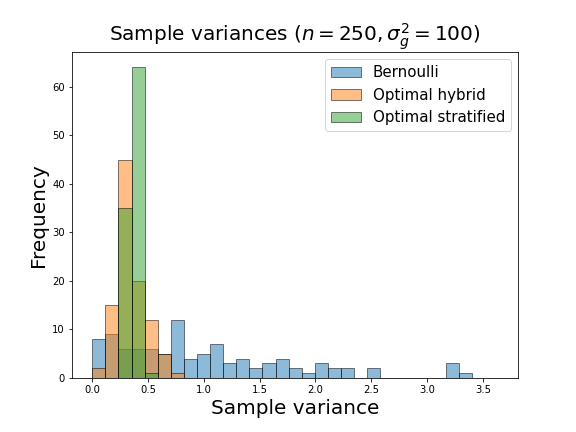}
    \end{subfigure}
    \hfill
    \begin{subfigure}[b]{0.32\textwidth}
        \centering
        \includegraphics[width=\textwidth]{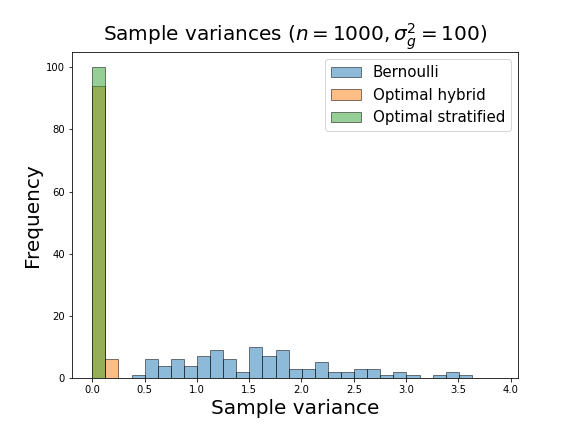}
    \end{subfigure} 

    \begin{subfigure}[b]{0.32\textwidth}
        \centering
        \includegraphics[width=\textwidth]{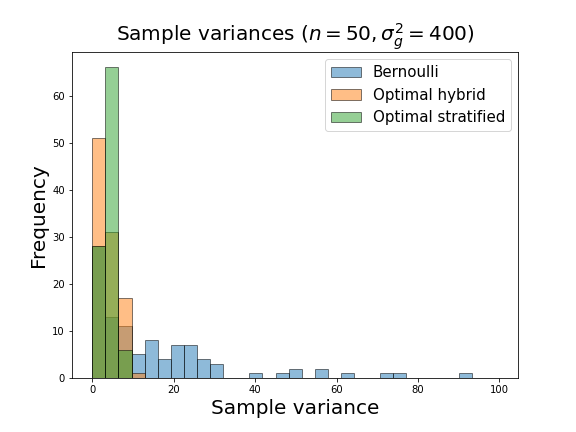}
    \end{subfigure}
    \hfill
    \begin{subfigure}[b]{0.32\textwidth}
        \centering
        \includegraphics[width=\textwidth]{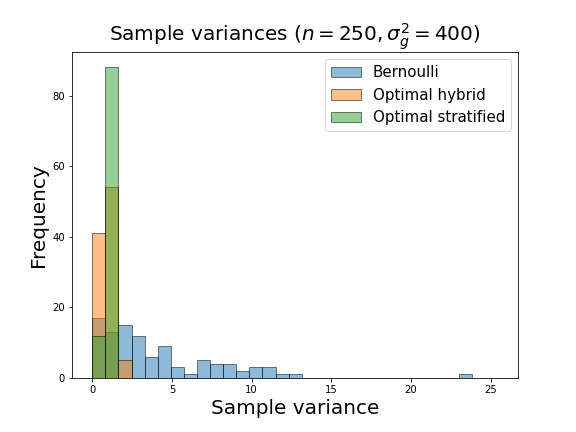}
    \end{subfigure}
    \hfill
    \begin{subfigure}[b]{0.32\textwidth}
        \centering
        \includegraphics[width=\textwidth]{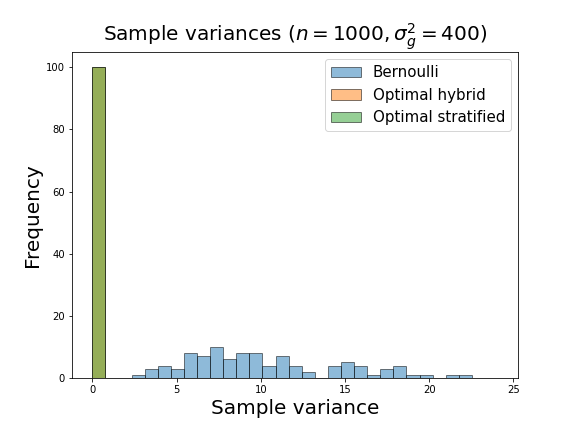}
    \end{subfigure} 
    
\caption{Distribution of sample variances across $100$ perturbations $h = g + N(0, \sigma_g^2)$. The three rows are at variances of $\sigma_g^2 \in \{25, 100, 400\}$, and the three columns are at samples sizes of $n \in \{50, 250$, $1000$\}.}
\label{fig:perturb}
\end{figure}

\begin{figure}
\centering
\includegraphics[width=0.6\textwidth]{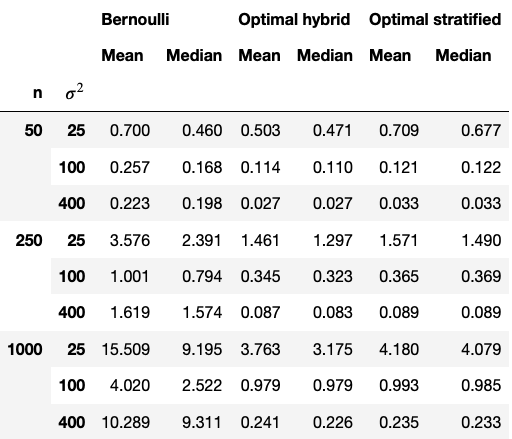}
\caption{Mean and median variances across the $100$ perturbations in Figure \ref{fig:perturb}, grouped by design and simulation parameters.}
\label{fig:mean_median}
\end{figure}

Figure \ref{fig:noiseless} plots the sample variance as a function of $n$ for each of: the optimal correlated Bernoulli design, the optimal shift-invariant Bernoulli design, the hybrid analogues described in Section \ref{ssec:robust}, and the optimal stratified design of \cite{bai2022}. As expected, the Bernoulli designs outperform the stratified design when $g$ is known exactly. In addition, the shift-invariant design and all hybrid designs perform far more comparably to the optimal correlated Bernoulli design than to the optimal stratified design. We note also that this performance improvement shrinks as $n$ increases. This is to be expected since, for all designs, the term dominating the variance as $n$ increases becomes the design-independent term from Lemma \ref{lemma:equiv-objective}. 

Next, we assess robustness of these same designs to incorrect knowledge of the oracle information. Specifically, we consider the same simulation setup but where our estimate $h$ is a perturbation of the true $g$. To move beyond the worst-case analysis of Theorem \ref{thm:l^infty-result}, we study the distribution of the sample variances over many such perturbations, defined via $h = g + N(0, \sigma_g^2).$

For a given perturbation, we run each design $1000$ times, as in the previous analysis, and compute the sample variance for that $h$. For fixed $\sigma_g^2$, we do this across $100$ of these perturbations, obtaining a distribution of $100$ sample variances for each design. Figure \ref{fig:perturb} shows the results of this procedure across $9 = 3\times 3$ settings: $n \in \{50, 250, 1000\}$ and $\sigma_g^2 \in \{25, 100, 400\}$ (for reference, the sample variance of the vector $g$ itself is typically between $500$ and $700$ for this simulation setup). To avoid cluttering the plots, we remove the shift-invariant and hybrid shift-invariant designs, which behave analogously to the optimal correlated Bernoulli designs. 

Figure \ref{fig:perturb} plots the results from these simulations, and Figure \ref{fig:mean_median} shows the mean and median variance for each design across all simulation settings. Both confirm the behavior that we would expect from the analysis in Section \ref{ssec:robust}. The optimal correlated Bernoulli design is very sensitive to mistaken choice of $g$, so increases in $\sigma_g^2$ lead to significant increases in the variance under that design. Moreover, as we saw in the worst-case behavior in Theorem \ref{thm:l^infty-result}, the variance of the optimal correlated Bernoulli design under an incorrect $h$ worsens as $n$ increases. In contrast, the optimal hybrid design shows far more stability to these perturbations and tends to maintain variances near those of the optimal stratified design even for large $n$ and large $\sigma_g^2$. 

These experiments provide further evidence that the hybrid designs can retain many of the desired properties of both the optimal correlated Bernoulli design and optimal stratified design. Appendix \ref{sec:add_sims} provides several other simulation settings to investigate when the performance of our designs is particularly strong or weak, as well as a performance comparison when using the proxy $h = \mu_0$. 

\section*{Acknowledgments} 
The authors thank John Cherian, Kevin Guo, Harrison Li, Anav Sood, and James Yang for helpful discussions and suggestions. T.\ M.\ was partly supported by a B.\ C.\ and E.\ J.\ Eaves Stanford Graduate Fellowship. This work was supported by the National Science Foundation under grant DMS-2152780.

\newpage 
\bibliographystyle{plainnat}
\bibliography{bernoulli}

\appendix
\section{Proofs}\label{sec:proofs}

\begin{proof}[\bf Proof of \Cref{lemma:equiv-objective}]
By the law of total variance, 
\begin{equation} \label{eq:total-variance}
\Var(\hat{\tau}_n \st X) = \bbE[\Var(\hat{\tau}_n \st X, Z) \st X] + \Var(\bbE[\hat{\tau}_n \st X, Z] \st X).
\end{equation}
We expand the first term:
\begin{align*} 
\bbE[\Var(\hat{\tau}_n \st X, Z) \st X] &= \bbE\left[\Var\left(\frac{2}{n} \sum_{i = 1}^{n} Y_i(1)Z_i - Y_i(0)(1 - Z_i) \st X, Z \right) \bigst  X \right] \\ 
&= \frac{4}{n^2} \bbE\left[\sum_{i = 1}^{n} \Var(Y_i(1)Z_i - Y_i(0)(1 - Z_i) \st X, Z) \bigst X \right] \\ 
&= \frac{4}{n^2} \bbE\left[\sum_{i = 1}^{n} Z_i \Var(Y_i(1) \st X, Z) + (1 - Z_i)\Var(Y_i(0) \st X, Z) \bigst X \right] \\ 
&= \frac{4}{n^2} \bbE\left[\sum_{i = 1}^{n} Z_i \Var(Y_i(1) \st X_i) + (1 - Z_i) \Var(Y_i(0) \st X_i) \bigst X \right] \\ 
&= \frac{2}{n^2} \sum_{i = 1}^{n} \sigma_1^2(X_i) + \sigma_0^2(X_i). 
\end{align*}
The second equality uses that the potential outcomes are independent for different units conditional on $X$ and $Z$. The third equality expands out the variance and uses that $Z_i(1 - Z_i) = 0$ identically to remove the cross-term. The fourth equality changes the conditioning to just $X_i$ since by assumption that it is the only term in $(X, Z)$ that affects the potential outcome distributions. Finally, the last equality uses that $\bbE[Z_i \st X] = 1/2$. The resulting quantity is the fundamental lower bound on the variance that arises due to the variability of the potential outcomes around their means. 

We now turn to the second term in \eqref{eq:total-variance}. We can similarly expand to obtain:
\begin{align*} 
\Var(\bbE[\hat{\tau}_n \st X, Z] \st X) &= \Var\left(\bbE\left[\frac{2}{n} \sum_{i = 1}^{n} Y_i(1)Z_i - Y_i(0)(1 - Z_i) \st X, Z \right] \bigm| X \right) \\ 
&= \frac{4}{n^2} \Var\left(\sum_{i = 1}^{n} \bbE[(Y_i(1) + Y_i(0))Z_i - Y_i(0) \st X, Z ] \bigm| X \right) \\ 
&= \frac{4}{n^2} \Var\left(\sum_{i = 1}^{n} Z_i \bbE[Y_i(1) + Y_i(0) \st X, Z] - \bbE[Y_i(0) \st X, Z ] \bigm| X \right) \\
&= \frac{4}{n^2} \Var\left(\sum_{i = 1}^{n} Z_i \bbE[Y_i(1) + Y_i(0) \st X, Z ] \bigm| X \right) \\ 
&= \frac{4}{n^2} \Var(g^{\top} Z) \\ 
&= \frac{4}{n^2} g^{\top} \Sigma g, 
\end{align*} 
where we recall that $\Sigma = \Cov(Z)$. The third equality uses that $Z_i$ is constant when conditioning on it, and the fourth uses that $Y_i(0)$ is independent of $Z_i$ given $X_i$, so $\bbE[Y_i(0) \st X_i, Z_i]=\bbE[Y_i(0)\st X_i]$, which has zero variance conditional on $X_i$. 
\end{proof}

\begin{proof}[\bf Proof of \Cref{thm:opt-design}]
We first state an important lemma about correlated Bernoulli distributions that is the main theorem of \cite{huber-maric}. To do so, fix a vector $v \in \{0, 1\}^n$ and let $\textbf{1}_n - v \in \{0, 1\}^n$ be the vector with opposite entries to those of $v$. Let $P_v$ be the two-point distribution that is uniform on $\{v, \textbf{1}_n - v\}$. Then $P_v$ induces marginal Bern($1/2$) distributions in each coordinate. The ensuing lemma states that any valid correlation structure on correlated Bernoulli distributions arises as a convex combination of these two-point distributions. 

\begin{lemma}\label{lemma:huber-maric}[Theorem 1 of \cite{huber-maric}] Let $\Sigma$ be the covariance matrix of a multivariate Bernoulli
distribution $P$ whose marginals are all Bern$(1/2)$. Then there exists $P'$ that is the convex combination of distributions of the form \textnormal{Unif($\{v, \textbf{1}_n - v\}$)} such that the covariance matrix of $P'$ is $\Sigma$. 
\end{lemma}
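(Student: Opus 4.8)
The plan is to prove realizability of $\Sigma$ by a \emph{symmetrization} argument: the covariance matrix of a Bern$(1/2)$-marginal distribution is unchanged when we average $P$ with its coordinatewise reflection, and reflection-invariant distributions are precisely the convex combinations of the two-point uniforms $P_v = \text{Unif}(\{v, \textbf{1}_n - v\})$. So rather than decompose $P$ itself, I would pass to a symmetrized law with the same covariance and decompose that.

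First I would record the two invariances of the reflection map $R : z \mapsto \textbf{1}_n - z$. If $Z \sim P$ has Bern$(1/2)$ marginals, then $\textbf{1}_n - Z$ also has Bern$(1/2)$ marginals (each $1 - Z_i$ is Bern$(1/2)$), and crucially $\Cov(1 - Z_i, 1 - Z_{i'}) = \Cov(Z_i, Z_{i'})$, so $\textbf{1}_n - Z$ has the very same covariance matrix $\Sigma$. Consequently the symmetrized law $P'$, obtained by drawing $Z \sim P$ and then replacing $Z$ by $\textbf{1}_n - Z$ with probability $1/2$, again has Bern$(1/2)$ marginals and covariance matrix exactly $\Sigma$, while in addition satisfying the reflection symmetry $P'(z) = P'(\textbf{1}_n - z)$ for every $z \in \{0,1\}^n$.

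Second, I would show that any such reflection-invariant $P'$ is a convex combination of the $P_v$. The points of $\{0,1\}^n$ split into the disjoint pairs $\{v, \textbf{1}_n - v\}$, and these are genuine two-element pairs since no $v \in \{0,1\}^n$ is fixed by $R$ (that would force $v_i = 1 - v_i$). Reflection invariance gives $P'(v) = P'(\textbf{1}_n - v)$, so on each pair $P'$ places equal mass on its two points; writing $\lambda_v := 2P'(v)$ for one representative $v$ of each pair yields nonnegative weights summing to $1$ with $P' = \sum_v \lambda_v P_v$. Passing to covariances then gives $\Sigma = \sum_v \lambda_v \Cov(P_v)$, which exhibits $\Sigma$ as the covariance of the claimed convex combination of two-point uniforms.

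The remaining computations are routine rather than obstructive: with $s := 2v - \textbf{1}_n \in \{-1,1\}^n$, a short calculation shows $\Cov(P_v) = \frac{1}{4} s s^{\top}$ (diagonal entries $1/4$, and off-diagonal entry $+1/4$ when $v_i = v_{i'}$, $-1/4$ otherwise), so each admissible $\Sigma$ is realized as $\frac{1}{4} \sum_v \lambda_v s^{(v)} (s^{(v)})^{\top}$. The one conceptual point to get right — and the step I would flag as the crux — is that $P$ itself generally is \emph{not} a mixture of the $P_v$, since it may weight $v$ and $\textbf{1}_n - v$ unequally; one therefore cannot decompose $P$ directly. The argument succeeds only because the covariance matrix is a reflection-invariant functional of $P$, and this invariance is exactly what licenses the symmetrization that makes the decomposition available.
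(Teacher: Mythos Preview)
Your proof is correct. The symmetrization argument is clean: the key observation that $\bbE[(1-Z_i)(1-Z_{i'})] = \bbE[Z_iZ_{i'}]$ when both marginals are Bern$(1/2)$ is exactly what makes the covariance invariant under the reflection $R$, and the decomposition of the reflection-invariant $P'$ into the orbit uniforms $P_v$ is immediate once you note that $R$ has no fixed points on $\{0,1\}^n$.

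As for comparison with the paper: the paper does not prove this lemma at all. It is stated as Theorem~1 of \cite{huber-maric} and invoked as a black box inside the proof of Theorem~\ref{thm:opt-design}. Your argument is therefore strictly more than what the paper supplies, and it is also more elementary than the general Huber--Mari\'c result, which handles arbitrary marginal probabilities $p_i$ rather than the symmetric case $p_i \equiv 1/2$. The symmetry is precisely what makes your short proof work: for general $p_i$ the reflection $z \mapsto \bsone_n - z$ does not preserve the marginals, so one cannot symmetrize, and the Huber--Mari\'c construction has to work harder. For the purposes of this paper, where only $p_i = 1/2$ is ever needed, your argument would make the presentation fully self-contained.
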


Lemma \ref{lemma:huber-maric} allows us to write an arbitrary covariance matrix for $n$ Bern($1/2$) random variables as a convex combination of all $2^n$ covariance matrices for distributions of the form $P_v$ with $v \in \{0, 1\}^n$.  
Our convex combination will place weight $\delta_v$ on $P_v$, and we use $\boldsymbol{\delta}$ to denote all $2^n$ weights $\delta_v$.
In light of Lemma \ref{lemma:equiv-objective}, we thus aim to solve
\begin{align} \label{eq:conv-comb}
\min_{\boldsymbol{\delta}} \text{ } &g^{\top} \Biggl(\,\sum_{v\in \{0,1\}^n} \delta_v \Cov(P_{v}) \Biggr) g \\
\,\text{such that } &\text{$\delta_v\ge0$ for all $v\in\{0,1\}^n$}, \\ 
&\sum_{v\in\{0,1\}^n}\delta_v=1.\notag
\end{align}
Clearly, \eqref{eq:conv-comb} is solved by placing all the weight on a vector $v$ that minimizes $g^{\top} \Cov(P_v) g$ (of which there could be several). For $v \in \{0, 1\}^n$ and $Z \sim P_v$, we have
\begin{align*} \Cov(Z_i, Z_{i'}) &= \bbP(Z_i = Z_{i'} = 1) - \frac{1}{4} \\
&= \frac{1}{2} v_i v_{i'} + \frac{1}{2} (1 - v_i)(1 - v_{i'}) - \frac{1}{4}\\ 
&=  \begin{cases} 
        \frac{1}{4}, & v_i = v_{i'} \\ 
        -\frac{1}{4}, & v_i \neq v_{i'}. 
    \end{cases}
\end{align*}
In matrix form, this is
$$\Cov(Z) = \frac{1}{2} vv^{\top} + \frac{1}{2} (1 - v)(1 - v)^{\top} - \frac{1}{4} \textbf{1}_n \textbf{1}_n^{\top}.$$
Hence, for $v \in \{0, 1\}^n,$
\begin{align*} 
g^{\top} \Cov(P_v) g &= g^{\top} \left(\frac{1}{2} vv^{\top} + \frac{1}{2} (1 - v)(1 - v)^{\top} - \frac{1}{4} \textbf{1}_n \textbf{1}_n^{\top} \right) g \\ 
&= \frac{1}{2} \left( (g^{\top} v)^2 + (g^{\top}(1 - v))^2 - \frac{1}{2} (g^{\top}\textbf{1}_n)^2 \right) \\ 
&= \frac{1}{2} \left( \left(\sum_{i : v_i = 1} g_i\right)^2 + \left(\sum_{i : v_i = 0} g_i\right)^2 - \frac{1}{2} \left(\sum_{i = 1}^{n} g_i\right)^2\right) \\ 
&= \frac{1}{4} \left( \sum_{i : v_i = 1} g_i - \sum_{i : v_i = 0} g_i \right)^2.
\end{align*} 
From this formula, it is apparent that the best choice of $v$ is any for which $\sum_{i : v_i = 1} g_i$ and $\sum_{i : v_i = 0} g_i$ are as close as possible. That is precisely the optimization problem \eqref{eq:knapsack} in Theorem \ref{thm:opt-design}. Finally, from the proof of Lemma \ref{lemma:equiv-objective}, the optimal total variance is 
\begin{align*} \Var(\hat{\tau}_n \st X) &= \frac{2}{n^2} \sum_{i = 1}^{n} \sigma_1^2(X_i) + \sigma_0^2(X_i) + \frac{4}{n^2} g^{\top} \Sigma_{\bern} g \\ 
&= \frac{2}{n^2} \sum_{i = 1}^{n} \sigma_1^2(X_i) + \sigma_0^2(X_i) + \frac{1}{n^2} \left( \sum_{i \in \mathcal{S}} g_i - \sum_{i \in \mathcal{S}^c} g_i \right)^2, \end{align*} 
where $\mathcal{S}$ is any subset that solves \eqref{eq:knapsack}. 
\end{proof}

\begin{proof}[\bf Proof of \Cref{thm:invariant-wins}]
Begin by defining $\delta_k = g_{(k)} - g_{(k - 1)}$ for $1 \leq k \leq n$, so that each $\delta_k \geq 0$. We can then write $g_{(k)} = \delta_1 + \ldots + \delta_k$. To prove the desired result, we show inductively that, for any even $k \leq n$, there exists a partition of the indices $\{(1),\ldots, (k)\}$ into disjoint subsets $\mathcal{S}_k$ and $\mathcal{S}_k^c$ of equal cardinality so that 
\begin{equation} \label{eq:inductive-ineq} \left(\sum_{i \in \mathcal{S}_k} g_{(i)} - \sum_{i \in \mathcal{S}_k^c} g_{(i)} \right)^2 \leq \sum_{i = 1}^{k/2} (g_{(2i)} - g_{(2i - 1)})^2. \end{equation}
In this way, we greedily construct a sequence of partitions with the desired property up to any even index $k$, and taking $k = n$ would give the desired result. 

For $k = 2$, the result is immediate since there is only one partition and so \eqref{eq:inductive-ineq} is an equality. For $k = 4$, consider the partition $\mathcal{S}_k = \{(1), (4)\}$ and $\mathcal{S}_k^c = \{(2), (3)\}$. Then 
\begin{align*} 
\left(\sum_{i \in \mathcal{S}_k} g_i - \sum_{i \in \mathcal{S}_k^c} g_i \right)^2 &= (\delta_4 - \delta_2)^2 
\leq \delta_2^2 + \delta_4^2 
= \sum_{i = 1}^{k/2} (g_{(2i)} - g_{(2i - 1)})^2.
\end{align*} 
Now suppose that we have a balanced partition satisfying \eqref{eq:inductive-ineq} for even $k < n$ and wish to construct one for $k + 2$. Let 
$$D_k = \sum_{i \in \mathcal{S}_k} g_i - \sum_{i \in \mathcal{S}_k^c} g_i.$$
We will add $g_{(k + 1)}$ to one sum and $g_{(k + 2)}$ to the other, so the net result will be either to add or subtract $\delta_{k + 2}$ from the total difference $D_k$. Consider appending index $(k + 2)$ to $\mathcal{S}_k$ if $D_k$ is negative and appending index $(k + 1)$ to $\mathcal{S}_k$ if $D_k$ is non-negative. Let $\text{sgn}(D_k) \in \{-1, 1\}$ be the sign of $D_k$. Then this procedure yields
\begin{align*} D_{k + 2}^2 &= (D_k - \text{sgn}(D_k)\delta_{k + 2})^2 \\ 
&= D_k^2 + \delta_{k + 2}^2 - 2D_k\text{sgn}(D_k) \delta_{k + 2}\\ 
&\leq D_k^2 + \delta_{k + 2}^2 \\ 
&\leq \delta_2^2 + \delta_4^2 + \ldots + \delta_k^2 + \delta_{k + 2}^2 \\ 
&= \sum_{i = 1}^{(k + 2)/2} (g_{(2i)} - g_{(2i - 1)})^2. 
\end{align*} 
Here, we use that the term being subtracted in the second line is always non-negative, and we use the inductive hypothesis in the last inequality. Taking $k = n$, the partition we have sequentially constructed retains an objective that is no larger than that of the optimal stratified design. Since this is merely one balanced partition, the same must be true for the solution to \eqref{eq:balanced-knapsack}. 

\end{proof}

\begin{proof}[\bf Proof of \Cref{thm:l^infty-result}]
Let $\mathcal{S}$ and $\mathcal{S}^c$ solve the knapsack problem \eqref{eq:knapsack} for the index function $h$. Then 
$$h^{\top} \Sigma_{\bern} h = \left( \sum_{i \in \mathcal{S}^c} h_i - \sum_{i \in \mathcal{S}} h_i \right)^2,$$
and the former sum is no smaller than the latter sum (they may be equal). If we can perturb each $h_i$ by up to $\epsilon$, this quantity is clearly made largest by increasing each $h_i$ in $\mathcal{S}^c$ by $\epsilon$ and decreasing each $h_i$ in $\mathcal{S}$ by $\epsilon$. The result is 
\begin{align*} 
g^{\top} \Sigma_{\bern} g &= \left( \sum_{i \in \mathcal{S}^c} (h_i + \epsilon) - \sum_{i \in \mathcal{S}} (h_i - \epsilon) \right)^2 \\ 
&= \left(\sum_{i \in \mathcal{S}^c} h_i - \sum_{i \in \mathcal{S}} h_i + n\epsilon \right)^2 \\ 
&= h^{\top} \Sigma_{\bern} h + n^2\epsilon^2 + 2n\epsilon \left(\sum_{i \in \mathcal{S}^c} h_i - \sum_{i \in \mathcal{S}} h_i\right) \\ 
&\geq h^{\top} \Sigma_{\bern} h + n^2\epsilon^2. 
\end{align*} 
Meanwhile, 
$$h^{\top} \Sigma_{\text{strat}} h = \sum_{i = 1}^{n/2} (h_{(2i)} - h_{(2i - 1)})^2,$$
and so the perturbation with the largest effect is to increase each term of the form $h_{(2k)}$ by $\epsilon$ and decrease each term of the form $h_{(2k - 1)}$ by $\epsilon$. This gives
\begin{align*} 
g^{\top} \Sigma_{\bern} g &= \sum_{i = 1}^{n/2} (h_{(2i)} - h_{(2i - 1)} + 2\epsilon)^2 \\ 
&= \sum_{i = 1}^{n/2} (h_{(2i)} - h_{(2i - 1)})^2 + 4n\epsilon^2 + 4\epsilon (h_{(n)} - h_{(1)}) \\ 
&= h^{\top} \Sigma_{\text{strat}} h + 4n\epsilon^2 + 4\epsilon (h_{(n)} - h_{(1)}). 
\end{align*}
Therefore, the inflation is of order $n^2\epsilon^2$ for the Bernoulli design and only $n\epsilon^2$ for the stratified design.  
\end{proof} 

\begin{proof}[\bf Proof of \Cref{thm:CLT}]
Our proof is very similar to that of Lemma S.1.4 in \cite{bai-romano2022}. Recall that 
$$\hat{\tau}_n = \frac{2}{n} \sum_{i = 1}^{n} Y_i(1) Z_i - Y_i(0)(1 - Z_i).$$
We decompose $\sqrt{n}(\hat{\tau}_n - \tau)$ into 
\begin{equation} \label{eq:tau_n-decomposition}
\sqrt{n}(\hat{\tau}_n - \tau) = A_n - B_n + C_n - D_n,
\end{equation}
where 
\begin{align*} 
A_n &= \frac{2}{\sqrt{n}} \sum_{i = 1}^{n} Y_i(1) Z_i - \bbE[Y_i(1) Z_i \st X, Z], \\ 
B_n &= \frac{2}{\sqrt{n}} \sum_{i = 1}^{n} Y_i(0) (1 - Z_i) - \bbE[Y_i(0) (1 - Z_i) \st X, Z], \\ 
C_n &= \frac{2}{\sqrt{n}} \sum_{i = 1}^{n} \bbE[Y_i(1) Z_i \st X, Z] - Z_i \bbE[Y_i(1)],\quad\text{and} \\ 
D_n &= \frac{2}{\sqrt{n}} \sum_{i = 1}^{n} \bbE[Y_i(0) (1 - Z_i) \st X, Z] - (1 - Z_i) \bbE[Y_i(0)]. 
\end{align*} 
Here, we use that exactly $n/2$ of the $Z_i$ terms are one and the rest are zero. Conditional on $X$ and $Z$, $A_n$ and $B_n$ are independent of each other and $C_n$ and $D_n$ are constant. We first analyze the limiting behavior of each term separately, then argue that we can combine them. 

We begin with $A_n$. Conditional on $X$ and $Z$, the individual terms in the sum are independent but not identically distributed. We verify that the Lindeberg central limit theorem holds conditional on $X$ and $Z$. Let
\begin{align*} s_n^2 &= \sum_{i = 1}^{n} \Var(2Y_i(1) Z_i - 2 \bbE[Y_i(1) Z_i \st X, Z] \st X, Z) \\ 
&= 4 \sum_{i = 1}^{n} \Var(Y_i(1) Z_i \st X, Z) \\ 
&= 4 \sum_{i: Z_i = 1} \Var(Y_i(1) \st X, Z) \\ 
&= 4 \sum_{i : Z_i = 1} \Var(Y_i(1) \st X_i). 
\end{align*}
In the last line, we use SUTVA and the assumption that the potential outcomes are conditionally independent of treatment given covariates. Now, 
$$\frac{s_n^2}{n} = \frac{2}{n} \sum_{i = 1}^{n} \Var(Y_i(1) \st X_i) + \frac{2}{n} \left( \sum_{i : Z_i = 1} \Var(Y_i(1) \st X_i) - \sum_{i : Z_i = 0} \Var(Y_i(0) \st X_i) \right).$$
By condition (c) in Assumption \ref{asymptotic-assumptions}, the second term in the preceding sum is $o_p(1)$, so 
\begin{align*} \frac{s_n^2}{n} &= \frac{2}{n} \sum_{i = 1}^{n} \Var(Y_i(1) \st X_i) + o_p(1) \overset{p}{\to} 2\bbE[\Var(Y_i(1) \st X_i)] \\
&= 2 \bbE[\sigma_1^2(X_i)]. \end{align*} 
Here, we use the law of large numbers, which holds because of condition (a) in Assumption \ref{asymptotic-assumptions}. For arbitrary $\epsilon > 0$, the Lindeberg condition is that 
\begin{equation} \label{eq:Lindeberg-condition} \frac{1}{s_n^2} \sum_{i = 1}^{n} \bbE\Bigl[\bigl(Y_i(1) Z_i - \bbE[Y_i(1) Z_i \st X, Z]\bigr)^2 \indic\bigl\{|Y_i(1) Z_i - \bbE[Y_i(1) Z_i \st X, Z]| > \epsilon s_n\bigr\} \st X, Z \Bigr] \overset{p}{\to} 0. \end{equation}
Fix any $M < \infty$. Since $\epsilon s_n \overset{p}{\to} \infty$, we have
\begin{align*} 
& \quad \text{ } \frac{1}{s_n^2} \sum_{i = 1}^{n} \bbE\Bigl[\bigl(Y_i(1) Z_i - \bbE[Y_i(1) Z_i \st X, Z]\bigr)^2 \,\indic\bigl\{|Y_i(1) Z_i - \bbE[Y_i(1) Z_i \st X, Z]| > \epsilon s_n\bigr\} \st X, Z \Bigr] \\ 
&\leq \frac{n}{s_n^2} \frac{1}{n} \sum_{i = 1}^{n} \bbE\Bigl[\bigl(Y_i(1) - \bbE[Y_i(1) \st X_i]\bigr)^2\, \indic\bigl\{|Y_i(1) - \bbE[Y_i(1) \st X_i]| > \epsilon s_n\bigr\} \st X, Z \Bigr] \\ 
&= \frac{n}{s_n^2} \frac{1}{n} \sum_{i = 1}^{n} \bbE\Bigl[\bigl(Y_i(1) - \mu_1(X_i)\bigr)^2\, \indic\bigl\{|Y_i(1) - \mu_1(X_i)| > \epsilon s_n\bigr\} \st X, Z \Bigr] \\ 
&\leq \frac{n}{s_n^2} \frac{1}{n} \sum_{i = 1}^{n} \bbE\Bigl[\bigl(Y_i(1) - \mu_1(X_i)\bigr)^2\, \indic\bigl\{|Y_i(1) - \mu_1(X_i)| > M\bigr\} \st X, Z \Bigr] + o_p(1)  \\ 
&= \frac{n}{s_n^2} \frac{1}{n} \sum_{i = 1}^{n} \bbE\Bigl[\bigl(Y_i(1) - \mu_1(X_i)\bigr)^2 \,\indic\bigl\{|Y_i(1) - \mu_1(X_i)| > M\bigr\} \st X_i \Bigr] + o_p(1) \\ 
&\overset{p}{\to} \frac{1}{2\bbE[\sigma_1^2(X_i)]} \bbE\Bigl[\bigl(Y_i(1) - \mu_1(X_i)\bigr)^2\, \indic\bigl\{|Y_i(1) - \mu_1(X_i)| > M\bigr\}\Bigr].
\end{align*} 
The first inequality uses that the summand is zero whenever $Z_i = 0$, the second uses that $\epsilon s_n > M$ for all $n$ sufficiently large, and the final line uses Slutsky's theorem and our earlier formula for the convergence of $s_n^2/n$. Since $(Y_i(1) - \bbE[Y_i(1) \st X_i])^2$ is integrable by condition (a) of Assumption \ref{asymptotic-assumptions}) and upper bounds the integrand in the final line across all $M$, we can apply the dominated convergence theorem as $M \to \infty$ and conclude that \eqref{eq:Lindeberg-condition} is true. The Lindeberg condition therefore holds in probability. 

From this, we claim that 
\begin{equation} 
\label{eq:CDF-convergence}
\underset{t \in \bbR}{\sup} \text{ } \Bigl|\bbP(A_n \leq t \st X, Z) - 
\Phi\Bigl(\frac{t}{\sqrt{2\bbE[\sigma_1^2(X_i)]}}\Bigr)\Bigr| \overset{p}{\to} 0.
\end{equation}
Suppose by way of contradiction that there exists a subsequence along which \eqref{eq:CDF-convergence} fails. Since the expression in \eqref{eq:Lindeberg-condition} converges to zero in probability, there exists a further subsequence along which it converges with probability one in $(X, Z)$. Along this further subsequence, then, $A_n \tod N(0, 2\bbE[\Var(Y_i(1) \st X_i)])$, so \eqref{eq:CDF-convergence} holds for this subsequence. This is a contradiction, so in fact \eqref{eq:CDF-convergence} must hold for the original sequence. 

An analogous proof shows that
$$\underset{t \in \bbR}{\sup} \text{ } \Bigl|\bbP(B_n \leq t \st X, Z) - \Phi\Bigl(\frac{t}{\sqrt{2\bbE[\sigma_0^2(X_i)]}}\Bigr)\Bigr| \overset{p}{\to} 0.$$
Since $A_n$ and $B_n$ are independent conditional on $X$ and $Z$, a similar subsequencing argument shows that 
$$\underset{t \in \bbR}{\sup} \text{ } \Bigl|\bbP(A_n - B_n \leq t \st X, Z) - \Phi\Bigl(\frac{t}{\sqrt{2\bbE[\sigma_1^2(X_i) + \sigma_0^2(X_i)]}}\Bigr)\Bigr| \overset{p}{\to} 0.$$

We now move on to the terms $C_n$ and $D_n$ in the decomposition \eqref{eq:tau_n-decomposition}, which we handle as a pair. We have 
$$C_n - D_n = \frac{2}{\sqrt{n}} \sum_{i = 1}^{n} Z_i\bigl(\mu_1(X_i) - \mu_1\bigr) - (1 - Z_i)\bigl(\mu_0(X_i) - \mu_0\bigr).$$
Then
$$\bbE[C_n - D_n \st X] = \frac{1}{\sqrt{n}} \sum_{i = 1}^{n} (\mu_1(X_i) - \mu_1) - (\mu_0(X_i) - \mu_0).$$
In addition, since the vector $Z$ takes on only two values with equal probability conditional on $X$, 
\begin{align*} 
\Var(C_n - D_n \st X) &= \bbE[(C_n - D_n - \bbE[C_n - D_n \st X])^2 \st X] \\ 
&= \frac{2}{n} \left(\sum_{i \in \mathcal{S}} \mu_1(X_i) - \mu_1 - \sum_{i \in \mathcal{S}^c} \mu_0(X_i) - \mu_0 \right. \\ 
&\left. \quad \quad \quad - \frac{1}{2} \sum_{i = 1}^{n} (\mu_1(X_i) - \mu_1) - (\mu_0(X_i) - \mu_0)\right)^2 \\ 
&\quad + \frac{2}{n} \left(\sum_{i \in \mathcal{S}^c} \mu_1(X_i) - \mu_1 - \sum_{i \in \mathcal{S}} \mu_0(X_i) - \mu_0 \right. \\ 
&\left. \quad \quad \quad - \frac{1}{2} \sum_{i = 1}^{n} (\mu_1(X_i) - \mu_1) - (\mu_0(X_i) - \mu_0)\right)^2 \\ 
&= \frac{2}{n} \left(\sum_{i \in \mathcal{S}} \mu_1(X_i) + \mu_0(X_i) - \sum_{i \in \mathcal{S}^c} \mu_1(X_i) + \mu_0(X_i) \right)^2 \\ 
&= \frac{2}{n} \left(\sum_{i \in \mathcal{S}} g(X_i) - \sum_{i \in \mathcal{S}^c} g(X_i) \right)^2 \\ 
&\overset{p}{\to} 0. 
\end{align*} 
Here, we use that $|\mathcal{S}| = |\mathcal{S}^c|$ to cancel the $\mu_1$ and $\mu_0$ terms in the third equality, and we use condition (d) of Assumption \ref{asymptotic-assumptions} in the last line. Therefore, 
$$\bbP(|C_n - D_n - \bbE[C_n - D_n \st X]| > \epsilon \st X) \overset{p}{\to} 0$$
by Chebyshev's inequality. By the dominated convergence theorem (using that all probabilities are bounded by $1$, which is integrable), 
$$\bbP(|C_n - D_n - \bbE[C_n - D_n \st X]| > \epsilon) \overset{p}{\to} 0,$$
and so 
\begin{align*} C_n - D_n &= \frac{1}{\sqrt{n}} \sum_{i = 1}^{n} (\mu_1(X_i) - \mu_1) - (\mu_0(X_i) - \mu_0) + o_p(1) \\ 
&\overset{p}{\to} N(0, \bbE[((\mu_1(X_i) - \mu_1) - (\mu_0(X_i) - \mu_0))^2]). 
\end{align*}
An analogous subsequencing argument to the one used before implies that we must have
\begin{align*} 
\sqrt{n}(\hat{\tau}_n - \tau) &= A_n - B_n + C_n - D_n  
\tod N(0, \nu^2),
\end{align*}
where 
\begin{align*}
\nu^2 &= 2\bbE[\Var(Y_i(1) \st X_i) + \Var(Y_i(0) \st X_i)] + \bbE[((\bbE[Y_i(1) \st X_i] - \bbE[Y_i(1)]) - (\bbE[Y_i(0) \st X_i] - \bbE[Y_i(0)]))^2] \\ 
&= 2\bbE[\sigma_1^2(X_i)] + 2\bbE[\sigma_0^2(X_i)] + \bbE[((\mu_1(X_i) - \mu_1) - (\mu_0(X_i) - \mu_0))^2].
\end{align*} 
By the law of total variance, $\bbE[\sigma^2_1(X_i)] = \sigma_1^2 - \Var(\mu_1(X_i))$ and analogously for $\bbE[\sigma^2_0(X_i)]$. Moreover, the last expectation in the formula for $\nu^2$ above is just $\Var(\mu_1(X_i) - \mu_0(X_i))$. Then 
\begin{align*}
\nu^2 &= 2\sigma_1^2 + 2\sigma_0^2 - 2 \Var(\mu_1(X_i)) - 2\Var(\mu_0(X_i)) + \Var(\mu_1(X_i) - \mu_0(X_i)) \\ 
&= 2\sigma_1^2 + 2\sigma_0^2 - \Var(\mu_1(X_i)) - \Var(\mu_0(X_i)) - 2 \Cov(\mu_1(X_i), \mu_0(X_i)) \\ 
&= 2\sigma_1^2 + 2\sigma_0^2 - \Var(\mu_1(X_i) + \mu_0(X_i)) \\ 
&= 2\sigma_1^2 + 2\sigma_0^2 - \Var(g(X_i)) \\ 
&= 2\sigma_1^2 + 2\sigma_0^2 - \bbE[(\bbE[g(X_i)] - g(X_i))^2],
\end{align*}
completing the proof.
\end{proof}

\begin{proof}[\bf Proof of Theorem \ref{thm:consistency}]
To prove consistency of $\hat{\nu}^2 = \hat{a}_n^2 - (\hat{b}_n^2 + \hat{\tau}_n^2)/2$ for $\nu^2$, we prove consistency of each term separately for a certain component of $\nu^2$. We break this up into three lemmas that correspond to Lemmas S.1.5, S.1.6, and S.1.7 of \cite{bai-romano2022}. The proofs are nearly identical but just require us to rederive a few initial results using our slightly different assumptions. 

We begin by stating a useful lemma that the first two sample moments converge to their population analogs. 

\begin{lemma} \label{lemma:moment-consistency}
For $z \in \{0, 1\}$, $\hat{\mu}_n(z) \overset{p}{\to} \bbE[Y_i(z)]$ and $\hat{\sigma}_n^2(z) \overset{p}{\to} \Var(Y_i(z))$, where 
\begin{align*}
    \hat{\mu}_n(z) &= \frac{2}{n} \sum_{i: Z_i = z} Y_i \quad\text{and}\quad
    \hat{\sigma}_n^2(z) = \frac{2}{n} \sum_{i: Z_i = z} (Y_i - \hat{\mu}_n(z))^2. 
\end{align*}
\end{lemma}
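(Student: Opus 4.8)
The plan is to exploit the extreme structure of the design. Under the optimal correlated shift-invariant Bernoulli design of Theorem~\ref{thm:CLT}, the treated set $\{i:Z_i=1\}$ is determined by the single coin flip $Z^*\sim\bern(1/2)$: it equals $\mathcal{S}_h$ when $Z^*=1$ and $\mathcal{S}_h^c$ when $Z^*=0$, and $Y_i=Y_i(1)$ for every treated unit. Since $Z^*$ is independent of $(X,\{Y_i(0),Y_i(1)\}_{i=1}^n)$, for any $\varepsilon>0$ we get $\bbP(|\hat\mu_n(1)-\bbE[Y_i(1)]|>\varepsilon)=\tfrac12\bbP(|\tfrac2n\sum_{i\in\mathcal{S}_h}Y_i(1)-\bbE[Y_i(1)]|>\varepsilon)+\tfrac12\bbP(|\tfrac2n\sum_{i\in\mathcal{S}_h^c}Y_i(1)-\bbE[Y_i(1)]|>\varepsilon)$, so it suffices to show $\tfrac2n\sum_{i\in\mathcal{S}_h}Y_i(1)\overset{p}{\to}\bbE[Y_i(1)]$ together with its $\mathcal{S}_h^c$ analogue. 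As Assumptions~\ref{asymptotic-assumptions}(c)--(d) and~\ref{asymptotic-assumptions2} are symmetric in $\mathcal{S}_h\leftrightarrow\mathcal{S}_h^c$, I would treat only $\mathcal{S}_h$, and the $z=0$ statements follow identically with $Y_i(0)$.

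First I would write $\tfrac2n\sum_{i\in\mathcal{S}_h}Y_i(1)=\tfrac1n\sum_{i=1}^nY_i(1)+\tfrac1n\bigl(\sum_{i\in\mathcal{S}_h}Y_i(1)-\sum_{i\in\mathcal{S}_h^c}Y_i(1)\bigr)$; the first term converges to $\bbE[Y_i(1)]$ by the weak law of large numbers, which applies under Assumption~\ref{asymptotic-assumptions}(a). In the difference term, decompose $Y_i(1)=\mu_1(X_i)+\eta_i$ with $\eta_i:=Y_i(1)-\mu_1(X_i)$, so $\bbE[\eta_i\st X_i]=0$ and $\bbE[\eta_i^2\st X_i]=\sigma_1^2(X_i)$. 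The contribution of $\mu_1(X_i)=\bbE[Y_i(1)\st X_i]$ is $o_p(1)$ directly by Assumption~\ref{asymptotic-assumptions2} with $r=1$. For the noise contribution, set $\varepsilon_i=+1$ on $\mathcal{S}_h$ and $\varepsilon_i=-1$ on $\mathcal{S}_h^c$; each $\varepsilon_i$ is $\sigma(X)$-measurable and, conditionally on $X$, the $\eta_i$ are independent and mean zero, so $\bbE\bigl[(\tfrac1n\sum_i\varepsilon_i\eta_i)^2\bigr]=\tfrac1n\bbE[\sigma_1^2(X_1)]\to0$ using $\bbE[\sigma_1^2(X_1)]\le\bbE[Y_1^2(1)]<\infty$; hence $\tfrac1n\sum_i\varepsilon_i\eta_i\overset{p}{\to}0$, giving $\hat\mu_n(z)\overset{p}{\to}\bbE[Y_i(z)]$. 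For the variances, note that exactly $n/2$ units receive each status, so $\hat\sigma_n^2(z)=\tfrac2n\sum_{i:Z_i=z}Y_i^2-\hat\mu_n(z)^2$; by the continuous mapping theorem it remains to show $\tfrac2n\sum_{i\in\mathcal{S}_h}Y_i^2(1)\overset{p}{\to}\bbE[Y_i^2(1)]$ (and the $\mathcal{S}_h^c$, $z=0$ versions), from which $\hat\sigma_n^2(z)\overset{p}{\to}\bbE[Y_i^2(z)]-\bbE[Y_i(z)]^2=\Var(Y_i(z))$.

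The same decomposition applies with $Y_i^2(1)$ in place of $Y_i(1)$: $\tfrac1n\sum_iY_i^2(1)\to\bbE[Y_i^2(1)]$ by the weak law, and the $\bbE[Y_i^2(1)\st X_i]$ piece of the difference term is $o_p(1)$ by Assumption~\ref{asymptotic-assumptions2} with $r=2$. The remaining piece is $\tfrac1n\sum_i\varepsilon_i\xi_i$ with $\xi_i:=Y_i^2(1)-\bbE[Y_i^2(1)\st X_i]$, and this is the one genuinely nontrivial point: under Assumption~\ref{asymptotic-assumptions}(a) the $\xi_i$ have only a finite \emph{first} moment, so the $L^2$ bound used above for $\eta_i$ is unavailable. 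I would handle it by a truncation argument: fix $C<\infty$, split $\xi_i=\xi_i\indic\{|\xi_i|\le C\}+\xi_i\indic\{|\xi_i|>C\}$, apply the conditional second-moment bound to the centered bounded part (contribution of order $C^2/n\to0$), and dominate the tail part and its conditional mean by $\tfrac1n\sum_i|\xi_i|\indic\{|\xi_i|>C\}$, which converges by the weak law to $\bbE[|\xi_1|\indic\{|\xi_1|>C\}]$; letting $n\to\infty$ and then $C\to\infty$ and using $\bbE[|\xi_1|]<\infty$ so that $\bbE[|\xi_1|\indic\{|\xi_1|>C\}]\to0$ gives $\tfrac1n\sum_i\varepsilon_i\xi_i\overset{p}{\to}0$. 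Assembling the pieces completes the proof; the truncation step for the second-moment noise term is the main obstacle, with everything else reducing to the weak law of large numbers and Assumptions~\ref{asymptotic-assumptions} and~\ref{asymptotic-assumptions2}.
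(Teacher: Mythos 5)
Your proof is correct, and its skeleton is the same as the paper's: both split the sample average over the treated set into the full-sample average (handled by the weak law of large numbers under Assumption~\ref{asymptotic-assumptions}(a)) plus a between-partition difference, then split that difference into a conditional-mean part killed by Assumption~\ref{asymptotic-assumptions2} and a conditionally mean-zero noise part. The differences are worth noting. First, you condition on the single coin flip so that the treated set is a deterministic function of $X$, whereas the paper keeps $Z_i$ inside the sums; these are equivalent here. Second, and more substantively, the paper outsources the noise term $\tfrac{2}{n}\sum_i Z_i\bigl(Y_i^r(1)-\bbE[Y_i^r(1)\st X_i]\bigr)\overset{p}{\to}0$ to Lemma S.1.5 of \cite{bai-romano2022} and asserts that the $r=2$ case is ``analogous'' to $r=1$, while you prove it self-contained and correctly observe that it is \emph{not} merely analogous: for $r=1$ the residuals have finite conditional variance so a conditional Chebyshev bound of order $1/n$ suffices, but for $r=2$ the residuals $Y_i^2(1)-\bbE[Y_i^2(1)\st X_i]$ have only a finite first moment under Assumption~\ref{asymptotic-assumptions}(a), so you need the truncation argument (bounded centered part controlled in $L^2$ at rate $C^2/n$, tail part controlled by the weak law and $\bbE[|\xi_1|\indic\{|\xi_1|>C\}]\to0$ as $C\to\infty$). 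That truncation step is exactly what the citation to \cite{bai-romano2022} is implicitly carrying, so your version buys a self-contained proof and makes visible a moment-condition subtlety the paper's phrasing elides; the paper's version buys brevity at the cost of leaning on the reference.
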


\begin{proof}
It suffices to show that 
$$\frac{2}{n} \sum_{i = 1}^{n} Y_i^r \indic\{Z_i = z\} \overset{p}{\to} \bbE[Y_i^r(z)]$$
for $r \in \{1, 2\}$ and $z \in \{0, 1\}$. We do so for the case $r = 1$ and $d = 1$ since the argument in each case is analogous. Note that 
\begin{align*} 
\hat{\mu}_n(1) &= \frac{2}{n} \sum_{i = 1}^{n} Y_i(1) Z_i \\ 
&= \frac{2}{n} \sum_{i = 1}^{n} Z_i(Y_i(1) - \mu_1(X_i)) + \frac{2}{n} \sum_{i = 1}^{n} Z_i \mu_1(X_i) \\ 
&= \frac{2}{n} \sum_{i = 1}^{n} Z_i(Y_i(1) - \mu_1(X_i)) \\ 
&\quad + \frac{1}{n} \sum_{i = 1}^{n} \mu_1(X_i) + \frac{1}{n} \left( \sum_{i : Z_i = 1} \mu_1(X_i) - \sum_{i : Z_i = 0} \mu_1(X_i) \right). 
\end{align*}
By Assumption \ref{asymptotic-assumptions2}, the last term in the final line above is $o_p(1)$. In addition, the second-to-last term converges in probability to $\mu_1 = \bbE[Y_i(1)]$ by the law of large numbers. All that remains is to show that 
$$\frac{2}{n} \sum_{i = 1}^{n} Z_i(Y_i(1) - \mu_1(X_i)) \overset{p}{\to} 0.$$
This step is identical to the proof of Lemma S.1.5 in \cite{bai-romano2022}.
\end{proof}

\begin{lemma} \label{lemma:a_n-consistency}
With $\hat{a}_n^2$ as in Theorem \ref{thm:consistency},
$$\hat{a}_n^2 \overset{p}{\to} \bbE[\sigma_1^2(X_i)] + \bbE[\sigma_2^2(X_i)] + \bbE[(\mu_1(X_i) - \mu_0(X_i))^2].$$ 
\end{lemma}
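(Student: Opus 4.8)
The plan is to condition on $(X,Z)$ and decompose $\hat a_n^2$ into three groups of terms, following the structure of Lemma S.1.6 of \cite{bai-romano2022}. Since $\pi$ lists the indices of $\mathcal{S}_h$ in its first $n/2$ slots and those of $\mathcal{S}_h^c$ in the last $n/2$, every pair $\{\pi(i),\pi(i+n/2)\}$ contains exactly one treated and one control unit; the two possible values of the single bit $Z^*$ are handled symmetrically and yield the same limit, so I may assume the units in $\mathcal{S}_h$ are the treated ones. Writing $\varepsilon_j^{(z)} = Y_j(z) - \mu_z(X_j)$ and expanding the square,
\[
(Y_{\pi(i)} - Y_{\pi(i+n/2)})^2 = \bigl(\mu_1(X_{\pi(i)}) - \mu_0(X_{\pi(i+n/2)})\bigr)^2 + (\varepsilon_{\pi(i)}^{(1)})^2 + (\varepsilon_{\pi(i+n/2)}^{(0)})^2 + (\text{cross terms}),
\]
so $\tfrac2n\sum_{i=1}^{n/2}(Y_{\pi(i)} - Y_{\pi(i+n/2)})^2$ becomes a mean-difference-squared sum, a noise-squared sum, and a cross-term sum, which I would treat one at a time.

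For the noise-squared sum, conditional on $(X,Z)$ the variables $(\varepsilon_{\pi(i)}^{(1)})^2$ are independent with conditional mean $\sigma_1^2(X_{\pi(i)})$ and index set $\mathcal{S}_h$, while the $(\varepsilon_{\pi(i+n/2)}^{(0)})^2$ have conditional mean $\sigma_0^2(X_{\pi(i+n/2)})$ and index set $\mathcal{S}_h^c$. Condition (c) of Assumption~\ref{asymptotic-assumptions} together with the law of large numbers (using the finite second moments from Assumption~\ref{asymptotic-assumptions}(a)) gives $\tfrac2n\sum_{j\in\mathcal{S}_h}\sigma_1^2(X_j)\overset{p}{\to}\bbE[\sigma_1^2(X_i)]$ and $\tfrac2n\sum_{j\in\mathcal{S}_h^c}\sigma_0^2(X_j)\overset{p}{\to}\bbE[\sigma_0^2(X_i)]$; replacing the conditional variances by the realized squared residuals introduces only an $o_p(1)$ error, which I would bound by the truncation argument of \cite{bai-romano2022} (necessary because only second moments of $Y$ are available). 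For the cross-term sum, each summand has conditional mean zero given $(X,Z)$, and $\tfrac1n\sum_i(\mu_1(X_{\pi(i)})-\mu_0(X_{\pi(i+n/2)}))^2=O_p(1)$ by Assumption~\ref{asymptotic-assumptions}(a), so a Chebyshev bound after the same truncation shows this group is $o_p(1)$. Both arguments are essentially verbatim copies of the corresponding steps in \cite{bai-romano2022}.

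The genuinely new ingredient is the mean-difference-squared sum $\tfrac2n\sum_{i=1}^{n/2}(\mu_1(X_{\pi(i)}) - \mu_0(X_{\pi(i+n/2)}))^2$. Substituting $\mu_1(X_{\pi(i)}) = \mu_1(X_{\pi(i+n/2)}) + r_i$ and $\mu_0(X_{\pi(i+n/2)}) = \mu_0(X_{\pi(i)}) + r_i'$, expanding in both ways and averaging, Assumption~\ref{pairing-assumption} applied to $z=1$ and $z=0$ (so $\tfrac1n\sum_i r_i^2\overset{p}{\to}0$ and $\tfrac1n\sum_i (r_i')^2\overset{p}{\to}0$) together with Cauchy--Schwarz reduces this sum to
\[
\frac1n\sum_{i=1}^{n/2}\bigl[(\mu_1(X_{\pi(i)})-\mu_0(X_{\pi(i)}))^2 + (\mu_1(X_{\pi(i+n/2)})-\mu_0(X_{\pi(i+n/2)}))^2\bigr] + o_p(1).
\]
The point is that the first half of this sum runs over $\mathcal{S}_h$ and the second over $\mathcal{S}_h^c$, so together they equal $\tfrac1n\sum_{j=1}^n(\mu_1(X_j)-\mu_0(X_j))^2$, which converges to $\bbE[(\mu_1(X_i)-\mu_0(X_i))^2]$ by the ordinary law of large numbers — no balance assumption on $(\mu_1-\mu_0)^2$ across $\mathcal{S}_h$ and $\mathcal{S}_h^c$ is needed. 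Adding the three limits and passing from convergence conditional on $(X,Z)$ to unconditional convergence by the subsequence argument already used in the proof of Theorem~\ref{thm:CLT} completes the proof. I expect the main obstacle to be organizational rather than analytic: making the symmetrization in the last step exact and verifying that the pair structure of $\pi$ really makes the two halves tile $\{1,\ldots,n\}$; the remaining estimates are routine transcriptions of \cite{bai-romano2022}.
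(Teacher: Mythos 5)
Your proof is correct, and it reaches the same limit through a slightly different decomposition than the paper's. The paper expands $\hat a_n^2=\frac2n\sum_i Y_i^2-\frac4n\sum_{i\le n/2}Y_{\pi(i)}Y_{\pi(i+n/2)}$, disposes of the squared term via Lemma~\ref{lemma:moment-consistency} (hence Assumption~\ref{asymptotic-assumptions2}), and applies the pairing argument only to the conditional expectation of the cross-product, where the remainder $\frac2n\sum_i(\mu_1(X_{\pi(i)})-\mu_1(X_{\pi(i+n/2)}))(\mu_0(X_{\pi(i)})-\mu_0(X_{\pi(i+n/2)}))$ is killed by Assumption~\ref{pairing-assumption}. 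You instead split each $Y$ into conditional mean plus noise, which routes the squared part through the balance of conditional variances (Assumption~\ref{asymptotic-assumptions}(c)) rather than through Assumption~\ref{asymptotic-assumptions2}, and your symmetrized substitution $\mu_1(X_{\pi(i)})=\mu_1(X_{\pi(i+n/2)})+r_i$ for the mean-difference-squared sum is the same pairing trick in different clothing; your observation that the two halves of $\pi$ tile $\{1,\dots,n\}$, so no balance condition on $(\mu_1-\mu_0)^2$ is needed, is exactly the point the paper exploits implicitly when it collapses the cross-product to $\frac2n\sum_i\mu_1(X_{\pi(i)})\mu_0(X_{\pi(i)})$. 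What your route buys is that this particular lemma no longer needs Assumption~\ref{asymptotic-assumptions2}; what it costs is more loose ends, since the direct expansion generates several mean-zero cross terms, each requiring the truncation-plus-Chebyshev machinery that the paper packages once inside Lemma~\ref{lemma:moment-consistency} and the deferred step of Lemma S.1.6 of \cite{bai-romano2022}. Since Assumption~\ref{asymptotic-assumptions2} is assumed in Theorem~\ref{thm:consistency} anyway, there is no net weakening of hypotheses, and both arguments are sound.
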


\begin{proof} 
Note that 
$$\hat{a}_n^2 = \frac{2}{n} \sum_{i = 1}^{n/2} (Y_{\pi(i)} - Y_{\pi(i + n/2)})^2 = \frac{2}{n} \sum_{i = 1}^{n} Y_i^2 - \frac{4}{n} \sum_{i = 1}^{n/2} Y_{\pi(i)} Y_{\pi(i + n/2)}.$$
By Lemma \ref{lemma:moment-consistency}, the first term converges in probability to $\bbE[Y_i(1)^2] + \bbE[Y_i(0)^2]$. We claim that the second term converges in probability to $-2\bbE[\mu_1(X_i)\mu_0(X_i)]$. This would imply the stated result, since then
\begin{align*} 
\hat{a}_n^2 &\overset{p}{\to} \bbE[Y_i(1)^2] + \bbE[Y_i(0)^2] - 2\bbE[\mu_1(X_i)\mu_0(X_i)] \\ 
&= \left(\bbE[Y_i(1)^2] - \bbE[\mu_1(X_i)^2]\right) + \left(\bbE[Y_i(0)^2] - \bbE[\mu_0(X_i)^2]\right)  + \bbE[\mu_1(X_i)^2 + \mu_0(X_i)^2 - 2\mu_0(X_i)\mu_1(X_i)] \\ 
&= \bbE[\Var(Y_i(1) \st X_i)] + \bbE[\Var(Y_i(0) \st X_i)] + \bbE[(\mu_1(X_i) - \mu_0(X_i))^2]
\end{align*}
by the law of total variance. Note that 
\begin{align*} 
\bbE\Biggl[\frac{4}{n} \sum_{i = 1}^{n/2} Y_{\pi(i)}Y_{\pi(i + n/2)} \Bigm| X \Biggr] &= \frac{2}{n} \sum_{i = 1}^{n/2} \mu_1(X_{\pi(i)})\mu_0(X_{\pi(i + n/2)}) + \mu_0(X_{\pi(i)})\mu_1(X_{\pi(i + n/2)}) \\ 
&= \frac{2}{n} \sum_{i = 1}^{n} \mu_1(X_{\pi(i)})\mu_0(X_{\pi(i)}) \\ 
&\quad + \frac{2}{n} \sum_{i = 1}^{n/2} (\mu_1(X_{\pi(i)}) - \mu_1(X_{\pi(i + n/2)}))(\mu_0(X_{\pi(i)}) - \mu_0(X_{\pi(i + n/2)})). 
\end{align*}
By Assumption \ref{pairing-assumption}, the second sum is $o_p(1)$, and so 
$$\bbE\Biggl[\frac{4}{n} \sum_{i = 1}^{n/2} Y_{\pi(i)}Y_{\pi(i + n/2)} \Bigm| X \Biggr] \overset{p}{\to} 2\bbE[\mu_1(X_i)\mu_0(X_i)].$$
All that remains is to show that 
$$\frac{4}{n} \sum_{i = 1}^{n/2} Y_{\pi(i)}Y_{\pi(i + n/2)} - \bbE[Y_{\pi(i)}Y_{\pi(i + n/2)} \st X] \overset{p}{\to} 0.$$
This step is identical to the second half of Lemma S.1.6 in \cite{bai-romano2022}.
\end{proof}

\begin{lemma} \label{lemma:b_n-consistency}
With $\hat{b}_n^2$ as in Theorem \ref{thm:consistency}, 
$$\hat{b}_n^2 \overset{p}{\to} \bbE[(\mu_1(X_i) - \mu_0(X_i))^2].$$
\end{lemma}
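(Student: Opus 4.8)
\textbf{Proof proposal for Lemma \ref{lemma:b_n-consistency}.}
The plan is to split $\hat b_n^2 = T_1 - T_2$, where $T_1 = \frac{4}{n}\sum_{i=1}^{n/2} Y_{\pi(2i)}Y_{\pi(2i-1)}$ collects the within-partition adjacent-pair products and $T_2 = \frac{4}{n}\sum_{i=1}^{n/2} Y_{\pi(i)}Y_{\pi(i+n/2)}$ collects the cross-partition pair products. The term $T_2$ is exactly the quantity already shown in the proof of Lemma \ref{lemma:a_n-consistency} to satisfy $T_2 \overset{p}{\to} 2\bbE[\mu_1(X_i)\mu_0(X_i)]$, so nothing new is required there. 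Hence it suffices to prove $T_1 \overset{p}{\to} \bbE[\mu_1(X_i)^2] + \bbE[\mu_0(X_i)^2]$, since then $\hat b_n^2 \overset{p}{\to} \bbE[\mu_1(X_i)^2] + \bbE[\mu_0(X_i)^2] - 2\bbE[\mu_1(X_i)\mu_0(X_i)] = \bbE[(\mu_1(X_i)-\mu_0(X_i))^2]$.

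First I would identify $\bbE[T_1 \st X, Z]$. By construction $\pi$ lists the indices of $\mathcal{S}_h$ in positions $1,\dots,n/2$ and those of $\mathcal{S}_h^c$ in positions $n/2+1,\dots,n$, and in the shift-invariant design every unit of $\mathcal{S}_h$ shares a common treatment status $Z^*$ while every unit of $\mathcal{S}_h^c$ shares $1-Z^*$. Thus, apart from at most one ``straddling'' pair and at most one leftover index per partition when $n/2$ is odd (whose combined contribution is $O_p(1/n)=o_p(1)$ and can be dropped), each pair $\{\pi(2i-1),\pi(2i)\}$ consists of two distinct units with the same treatment status, the first half of the pairs exhausting $\mathcal{S}_h$ and the second half exhausting $\mathcal{S}_h^c$. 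Using conditional independence of the outcomes of distinct units given $(X,Z)$, $\bbE[T_1\st X,Z]$ equals $\tfrac{4}{n}$ times a sum of products $\mu_W(X_{\pi(2i-1)})\mu_W(X_{\pi(2i)})$ over pairs inside $\mathcal{S}_h$ (with $W=Z^*$) plus the analogous sum over pairs inside $\mathcal{S}_h^c$ (with $W=1-Z^*$). Next I would apply $ab = \tfrac12(a^2+b^2) - \tfrac12(a-b)^2$ to each such product: Assumption \ref{pairs-of-pairs-assumption} makes the aggregated $(a-b)^2$ remainders $o_p(1)$, leaving $\frac{2}{n}\sum_{j\in\mathcal{S}_h}\mu_W(X_j)^2 + \frac{2}{n}\sum_{j\in\mathcal{S}_h^c}\mu_{1-W}(X_j)^2$, which since $\{W,1-W\}=\{0,1\}$ equals $\frac{2}{n}\sum_{j\in\mathcal{S}_h}\mu_1(X_j)^2 + \frac{2}{n}\sum_{j\in\mathcal{S}_h^c}\mu_0(X_j)^2$ or its mirror image. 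In either case I would write $\frac{2}{n}\sum_{j\in\mathcal{S}_h}\mu_1(X_j)^2 = \frac{1}{n}\sum_{j=1}^n\mu_1(X_j)^2 + \frac{1}{n}\bigl(\sum_{j\in\mathcal{S}_h}\mu_1(X_j)^2 - \sum_{j\in\mathcal{S}_h^c}\mu_1(X_j)^2\bigr)$: the first term converges to $\bbE[\mu_1(X_i)^2]$ by the law of large numbers (finite since $\mu_1(X_i)^2\le\bbE[Y_i(1)^2\st X_i]$ by Jensen's inequality and condition (a) of Assumption \ref{asymptotic-assumptions}), and the second is $o_p(1)$ because by the triangle inequality it is dominated by the $r=2$ case of Assumption \ref{asymptotic-assumptions2} (which controls $\bbE[Y_i^2(z)\st X_i] = \sigma_z^2(X_i) + \mu_z(X_i)^2$) together with condition (c) of Assumption \ref{asymptotic-assumptions} (which controls $\sigma_z^2(X_i)$). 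The same applies to the $\mu_0^2$ sum over $\mathcal{S}_h^c$, so $\bbE[T_1\st X,Z]\overset{p}{\to}\bbE[\mu_1(X_i)^2]+\bbE[\mu_0(X_i)^2]$ whatever the value of $Z^*$.

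Finally I would show $T_1 - \bbE[T_1\st X,Z]\overset{p}{\to} 0$. Writing $Y_i = \mu_{Z_i}(X_i) + e_i$ with $\bbE[e_i\st X,Z]=0$ and expanding each product $Y_{\pi(2i-1)}Y_{\pi(2i)}$, the ``$\mu\mu$'' term reproduces $\bbE[T_1\st X,Z]$ and each remaining cross term has conditional mean zero; bounding its conditional variance, which is $O_p(1/n)$ after the usual truncation step, and invoking conditional Chebyshev together with the dominated convergence theorem (all probabilities are bounded by $1$) to pass from conditional to unconditional convergence, exactly as in the second half of the proof of Lemma \ref{lemma:a_n-consistency} (itself identical to the corresponding step in \cite{bai-romano2022}). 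Combining with $T_2 \overset{p}{\to} 2\bbE[\mu_1(X_i)\mu_0(X_i)]$ completes the proof.

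I expect the main obstacle to be this last concentration step: to avoid assuming more than two moments of the potential outcomes, one must run the truncation argument of \cite{bai-romano2022}, which needs a little care here because the two units of an adjacent pair share the single random status $Z^*$ (though this is harmless once one conditions on $(X,Z)$, under which the outcomes of distinct units are independent). A secondary bookkeeping nuisance is the combination of Assumptions \ref{asymptotic-assumptions2} and \ref{asymptotic-assumptions}(c) needed to control the partition imbalance of $\sum_j \mu_z(X_j)^2$, and the routine disposal of the straddling pair when $n\equiv 2 \pmod 4$.
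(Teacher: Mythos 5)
Your proposal is correct and follows essentially the same route as the paper: the same split into within-partition and cross-partition products, reuse of Lemma~\ref{lemma:a_n-consistency} for the latter, the identity $ab=\tfrac12(a^2+b^2)-\tfrac12(a-b)^2$ with Assumption~\ref{pairs-of-pairs-assumption} absorbing the remainder, the law of large numbers for $\tfrac1n\sum_i\mu_1(X_i)^2$, and a final Chebyshev/truncation concentration step borrowed from \cite{bai-romano2022}. The only real divergence is that you condition on $(X,Z)$ rather than on $X$ alone, which costs you an extra partition-balance argument for $\sum_{j}\mu_z(X_j)^2$ (correctly extracted from Assumption~\ref{asymptotic-assumptions2} together with Assumption~\ref{asymptotic-assumptions}(c)), whereas the paper averages over the coin flip first so that the $\tfrac12$--$\tfrac12$ mixture of $\mu_1\mu_1$ and $\mu_0\mu_0$ appears directly; your explicit disposal of the straddling pair when $n/2$ is odd is a detail the paper silently omits.
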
 

\begin{proof} 
Recall that 
$$\hat b_n^2=\frac{4}{n} \sum_{i = 1}^{n/2} Y_{\pi(2i)} Y_{\pi(2i - 1)} - \frac{4}{n} \sum_{i = 1}^{n/2} Y_{\pi(i)} Y_{\pi(i + n/2)}.$$
We have seen in Lemma \ref{lemma:a_n-consistency} that the second term converges in probability to $-2\bbE[\mu_1(X_i)\mu_0(X_i)]$. To complete the proof, it suffices to show that the first term converges in probability to $\bbE[\mu_1(X_i)^2] + \bbE[\mu_0(X_i)^2]$. Each summand contains two units in the same half of the knapsack partition, so they have the same treatment status. Then for $i \leq n/2$, 
$$\bbE[Y_{\pi(2i)} Y_{\pi(2i - 1)} \st X] = \frac{1}{2} \mu_1(X_{\pi(2i)}) \mu_1(X_{\pi(2i - 1)}) + \frac{1}{2} \mu_0(X_{\pi(2i)}) \mu_0(X_{\pi(2i - 1)})$$
and so 
$$\bbE\Biggl[\frac{4}{n} \sum_{i = 1}^{n/2} Y_{\pi(2i)} Y_{\pi(2i - 1)} \bigm| X\Biggr] = \frac{2}{n} \sum_{i = 1}^{n/2} \mu_1(X_{\pi(2i)}) \mu_1(X_{\pi(2i - 1)}) + \frac{2}{n} \sum_{i = 1}^{n/2} \mu_0(X_{\pi(2i)}) \mu_0(X_{\pi(2i - 1)}).$$
We claim that the first sum converges in probability to $\bbE[\mu_1(X_i)^2]$ and the second converges in probability to $\bbE[\mu_0(X_i)^2]$. For the first, write 
\begin{align*} \frac{2}{n} \sum_{i = 1}^{n/2} \mu_1(X_{\pi(2i)}) \mu_1(X_{\pi(2i - 1)}) &= \frac{2}{n} \sum_{i = 1}^{n/2} \mu_1(X_{\pi(2i)})^2 + \frac{2}{n} \sum_{i = 1}^{n/2} \mu_1(X_{\pi(2i)})\bigl(\mu_1(X_{\pi(2i - 1)}) - \mu_1(X_{\pi(2i)})\bigr) \\ 
&= \frac{1}{n} \sum_{i = 1}^{n} \mu_1(X_{\pi(i)})^2 + \frac{1}{n} \sum_{i = 1}^{n/2} \mu_1(X_{\pi(2i)})^2 - \mu_1(X_{\pi(2i - 1)})^2 \\ 
&\quad + \frac{2}{n} \sum_{i = 1}^{n/2} \mu_1(X_{\pi(2i)})\bigl(\mu_1(X_{\pi(2i - 1)}) - \mu_1(X_{\pi(2i)})\bigr) \\ 
&=\frac{1}{n} \sum_{i = 1}^{n} \mu_1(X_{\pi(i)})^2 - \frac{1}{n} \sum_{i = 1}^{n/2} \bigl(\mu_1(X_{\pi(2i)}) - \mu_1(X_{\pi(2i -1 )}\bigr)^2.
\end{align*} 
The first term in the final expression converges in probability to $\bbE[\mu_1(X_i)^2]$ by the law of large numbers. The second converges in probability to zero by Assumption \ref{pairs-of-pairs-assumption}. To conclude, one need only show that $\hat{b}_n^2 - \bbE[\hat{b}_n^2 \st X] \overset{p}{\to} 0$. This is the same as in Lemma S.1.7 of \cite{bai-romano2022}. 
\end{proof}
We can now complete the proof of Theorem \ref{thm:consistency}. Recall that $\hat{\nu}^2 = 2\hat{a}_n^2 - (\hat{b}_n^2 + \hat{\tau}_n^2)$ and
\begin{align*} 
\hat{a}_n^2 &\overset{p}{\to} \bbE[\sigma_1^2(X_i)] + \bbE[\sigma_0^2(X_i)] + \bbE[(\mu_1(X_i) - \mu_0(X_i))^2], \\ 
\hat{b}_n^2 &\overset{p}{\to} \bbE[(\mu_1(X_i) - \mu_0(X_i))^2], \\ 
\hat{\tau}_n^2 &\overset{p}{\to} (\mu_1 - \mu_0)^2.
\end{align*} 
By the continuous mapping theorem, 
\begin{align*} 
\hat{\nu}^2 &= 2\hat{a}_n^2 - (\hat{b}_n^2 + \hat{\tau}_n^2) \\ 
&\overset{p}{\to} 2\bbE[\sigma_1^2(X_i)] + 2\bbE[\sigma_0^2(X_i)] + 2\bbE[(\mu_1(X_i) - \mu_0(X_i))^2] - \left(\bbE[(\mu_1(X_i) - \mu_0(X_i))^2] + (\mu_1 - \mu_0)^2 \right) \\ 
&= 2\bbE[\sigma_1^2(X_i)] + 2\bbE[\sigma_0^2(X_i)] + \bbE[(\mu_1(X_i) - \mu_0(X_i))^2] - (\mu_1 - \mu_0)^2 \\ 
&= 2\bbE[\sigma_1^2(X_i)] + 2\bbE[\sigma_0^2(X_i)] + \Var(\mu_1(X_i) - \mu_0(X_i)) \\ 
&= 2\bbE[\sigma_1^2(X_i)] + 2\bbE[\sigma_0^2(X_i)] + \bbE[(\tau(X_i) - \tau)^2] \\ 
&= \nu^2.
\end{align*} 
Hence, $\hat{\nu}^2 \overset{p}{\to} \nu^2$, and so 
$$\frac{\sqrt{n}(\hat{\tau}_n - \tau)}{\hat{\nu}} \tod N(0, 1),$$
by Slutsky's theorem and Theorem \ref{thm:CLT}.
\end{proof}

\begin{lemma} \label{lemma:max-o_p}
Suppose that $Y_1, Y_2, \ldots$ are sampled IID from a distribution $P$ with finite variance. Then 
$$\underset{1 \leq i \leq n}{\max} \text{ } Y_i = o_p(n^{1/2}).$$
\end{lemma}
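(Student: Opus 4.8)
The plan is the standard truncation-plus-union-bound argument: reduce the statement about the maximum to a tail bound on a single summand, and then exploit that finite variance forces that tail to decay faster than $1/n$ at the scale $\sqrt{n}$. Since $|\max_{1\le i\le n}Y_i|\le \max_{1\le i\le n}|Y_i|$, it suffices to show that for every fixed $\varepsilon>0$,
\begin{equation*}
\bbP\Bigl(\max_{1\le i\le n}|Y_i|>\varepsilon\sqrt{n}\Bigr)\longrightarrow 0.
\end{equation*}

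First I would apply the union bound and use that the $Y_i$ are identically distributed, giving
\begin{equation*}
\bbP\Bigl(\max_{1\le i\le n}|Y_i|>\varepsilon\sqrt{n}\Bigr)\le n\,\bbP\bigl(|Y_1|>\varepsilon\sqrt{n}\bigr).
\end{equation*}
Next I would bound $n\,\bbP(|Y_1|>\varepsilon\sqrt{n})$ using the second moment: since $\{|Y_1|>\varepsilon\sqrt n\}=\{Y_1^2>\varepsilon^2 n\}$, Markov's inequality applied to $Y_1^2\indic\{Y_1^2>\varepsilon^2 n\}$ yields
\begin{equation*}
n\,\bbP\bigl(|Y_1|>\varepsilon\sqrt n\bigr)\le \frac{1}{\varepsilon^2}\,\bbE\bigl[Y_1^2\,\indic\{Y_1^2>\varepsilon^2 n\}\bigr].
\end{equation*}
Finally, because $\bbE[Y_1^2]<\infty$, the integrand $Y_1^2\indic\{Y_1^2>\varepsilon^2 n\}$ is dominated by the integrable random variable $Y_1^2$ and tends to $0$ almost surely as $n\to\infty$; the dominated convergence theorem then gives $\bbE[Y_1^2\indic\{Y_1^2>\varepsilon^2 n\}]\to 0$. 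Chaining the three displays shows $\bbP(\max_i|Y_i|>\varepsilon\sqrt n)\to0$, i.e.\ $\max_{1\le i\le n}|Y_i|=o_p(n^{1/2})$, hence also $\max_{1\le i\le n}Y_i=o_p(n^{1/2})$.

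There is no real obstacle here; the only point that requires a moment's care is the passage from the tail probability to the truncated second moment (one should not merely write Chebyshev's inequality $\bbP(|Y_1|>\varepsilon\sqrt n)\le \bbE[Y_1^2]/(\varepsilon^2 n)$, since that only gives $n\bbP(\cdot)\le \bbE[Y_1^2]/\varepsilon^2$, which is bounded but does not vanish). Keeping the indicator inside the expectation and invoking dominated convergence is what delivers the $o_p$ rather than an $O_p$ conclusion. If one wanted the sharper statement one could note the argument in fact shows $\max_i|Y_i|/\sqrt n\to 0$ in $L^0$ for any distribution with $\bbE[Y_1^2]<\infty$, but the stated $o_p$ claim is all that is needed downstream.
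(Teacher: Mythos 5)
Your proof is correct and uses essentially the same mechanism as the paper's: a union bound over the $n$ variables followed by dominated convergence exploiting $\bbE[Y_1^2]<\infty$. The only cosmetic difference is that the paper packages the argument as the slightly stronger statement $\bbE[\max_i Y_i^2/n]\to 0$ via the layer-cake formula (dominating the integrated tail by $\bbP(Y_1^2>t)$), whereas you work directly with the tail probability at scale $\varepsilon\sqrt{n}$ and a truncated Markov bound; both yield the claimed $o_p(n^{1/2})$.
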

\begin{proof} 
We will prove the stronger result that
$$\limn \bbE\left[\underset{1 \leq i \leq n}{\max} \text{ } \frac{Y_i^2}{n} \right] = 0.$$
First, observe that
\begin{align*} 
\bbE\left[\underset{1 \leq i \leq n}{\max} \text{ } \frac{Y_i^2}{n} \right] &= \int_0^{\infty} n^{-1} \bbP(\underset{1 \leq i \leq n}{\max} \text{ } Y_i^2 > t) dt \\ 
&:= \int_0^{\infty} f_n(t) dt. 
\end{align*} 
Here, $f_n(t)$ converges pointwise to $0$ everywhere as $n \to \infty$, since it is bounded by $1/n$. To apply the dominated convergence theorem, we need to find an integrable function that dominates it. Indeed, 
\begin{align*} 
f_n(t) &= \frac{1}{n} \bbP(\underset{1 \leq i \leq n}{\max} \text{ } Y_i^2 > t) \\ 
&\leq \frac{1}{n} \sum_{i = 1}^{n} \bbP(Y_i^2 > t) \\ 
&= \bbP(Y_1^2 > t)
\end{align*} 
Finally, $\bbP(Y_1^2 > t)$ is integrable by the assumption that $\Var(Y_i) < \infty$. The dominated convergence theorem then gives the desired result. 
\end{proof}

\section{Additional simulations}\label{sec:add_sims}
In this section, we present more simulation results in a variety of settings to highlight the strengths and weaknesses of the correlated Bernoulli designs. We first present more examples like that shown in Figure \ref{fig:noiseless} in which $g(X)$ is known. Then, we discuss an example in which only $X$ is known and there is no prior information about $g(X)$. The correlated Bernoulli design is more poorly-suited to this case, but we present a reasonable approach and assess its performance. 

We focus on a few different regimes in which we can expect correlated Bernoulli designs to perform especially well or poorly compared to the optimal stratified design. This involves the size of $g^{\top} \Sigma_{\bern} g$ relative to $g^{\top} \Sigma_{\strat} g$. In other words, we should expect a strong improvement over the optimal stratified design when the knapsack difference is small and the $g_i$ terms have a few large gaps, and a weaker improvement when the knapsack difference is larger or the $g_i$ terms are relatively equispaced. Of note, when $g(X)$ has a symmetric distribution, it is likely that the knapsack difference will be small since all pairs of symmetrically opposed points sum to the same value. 

We thus consider the following settings: \\ 
\begin{compactenum}[\qquad1)]
\item $X \sim \text{Unif}[0, 1]$, $\mu_0(X) = X$, $\mu_1(X) = \mu_0(X) + 0.05 X^{1.05}$,\\[-2.5ex]
\item $X \sim N(0, 1)$, $\mu_0(X) = X^3$, $\mu_1(X) = 1.2 \mu_0(X)$, and \\[-2.5ex]
\item $X \sim \text{Poisson}(20)$, $\mu_0(X) = X^2$, $\mu_1(X) = \mu_0(X) + 0.2 X^3$.
\end{compactenum} 
\text{ } \\ 
In each case, $Y \st X, Z$ is generated IID as $N(\mu_Z(X), 1)$, so the irreducible error (i.e., the design-independent term in the variance decomposition in Lemma \ref{lemma:equiv-objective}) is $4/n$. The first setting should have a small knapsack difference since $g(X)$ is nearly symmetric and should have a small $g^{\top} \Sigma_{\strat} g$ since $g(X)$ has fairly equispaced gaps. The second should have a small knapsack difference because of symmetry but a larger $g^{\top} \Sigma_{\strat} g$ since the gaps $g_{(2i)} - g_{(2i - 1)}$ are quite large. The third should have higher values for both because of the large right tail in the distribution of $g(X)$. 

The results of these simulations are shown in Figure \ref{fig:additional_sims}. As predicted, all designs achieve nearly the fundamental lower bound of $4/n$ in the first setting, and so the correlated Bernoulli design provides negligible improvement. In the second setting, the correlated Bernoulli design is near the fundamental lower bound and provides a large improvement over the stratified design, which suffers from wide gaps in the $g$ vector. In the Poisson setting, both approaches have a performance gap to the fundamental lower bound, reflecting a larger knapsack difference. Still, the performance improvement of the correlated Bernoulli designs over the stratified ones is significant. 

We next assess the performance of the designs in these simulation settings when fit using the proxy $h = \mu_0$. This is a reasonable scenario to arise in practice, when there may be a large dataset of prior control units before a treatment is rolled out. If $\tau(X)$ is small, then $g(X) \approx 2\mu_0(X)$ and so $\mu_0(X)$ is a good proxy. 

These results are shown in Figure \ref{fig:additional_sims_proxy}. Unsurprisingly, the performances remain similar in the uniform and Gaussian simulations, in which $\tau(X)$ is fairly small. However, in the Poisson setting in which $\mu_0$ is a poor proxy, the prior performance ranking fails to hold; in fact, the optimal correlated Bernoulli design with a single coin flip does quite poorly for large $n$. Interestingly, the hybrid SIB design retains strong performance; this suggests some degree of stability to a poorly chosen proxy, as the theory predicts. 

\clearpage
\begin{figure}[hbt!]
    \centering
    \begin{subfigure}{\textwidth}
        \centering
        \includegraphics[width=0.6\textwidth]{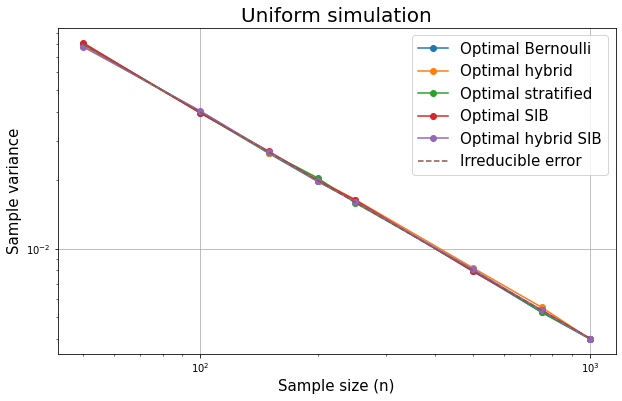}
        \label{fig:sub1}
    \end{subfigure}
    
    \vspace{1em}
    
    \begin{subfigure}{\textwidth}
        \centering
        \includegraphics[width=0.6\textwidth]{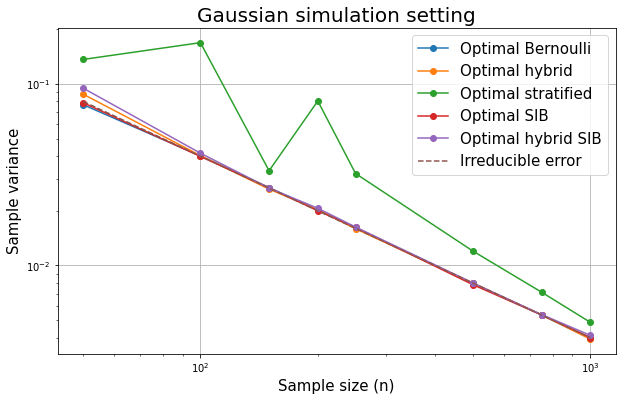}
        \label{fig:sub2}
    \end{subfigure}
    
    \vspace{1em} 
    
    \begin{subfigure}{\textwidth}
        \centering
        \includegraphics[width=0.6\textwidth]{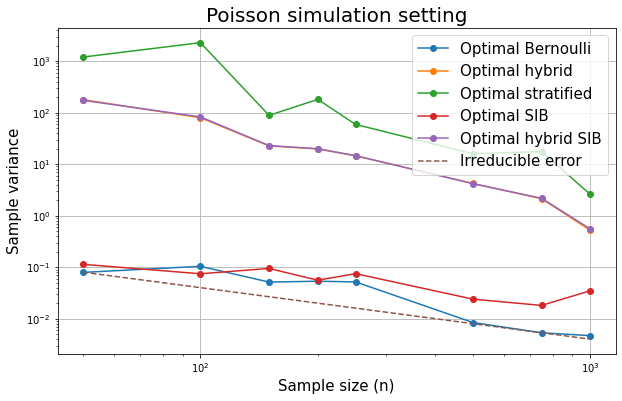}
        \label{fig:sub3}
    \end{subfigure}
    
\caption{Comparison of designs for each of the three additional simulation settings in Appendix \ref{sec:add_sims}.}
\label{fig:additional_sims}
\end{figure}

\clearpage
\begin{figure}[hbt!]
    \centering
    \begin{subfigure}{\textwidth}
        \centering
        \includegraphics[width=0.6\textwidth]{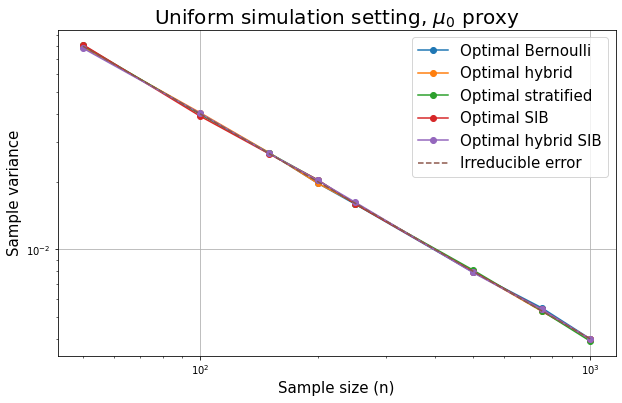}
        \label{fig:sub1_proxy}
    \end{subfigure}
    
    \vspace{1em}
    
    \begin{subfigure}{\textwidth}
        \centering
        \includegraphics[width=0.6\textwidth]{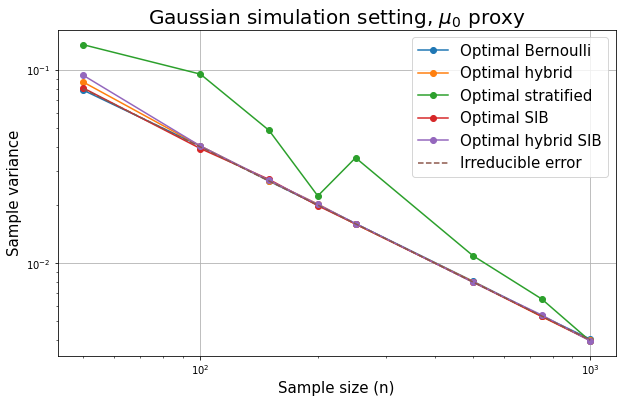}
        \label{fig:sub2_proxy}
    \end{subfigure}
    
    \vspace{1em} 
    
    \begin{subfigure}{\textwidth}
        \centering
        \includegraphics[width=0.6\textwidth]{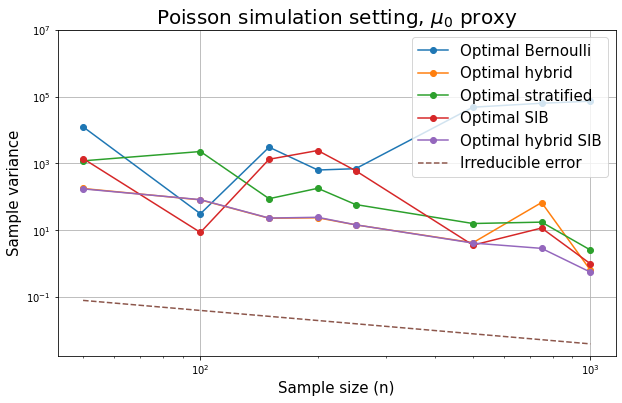}
        \label{fig:sub3_proxy}
    \end{subfigure}
    
\caption{Comparison of designs for each of the three additional simulation settings in Appendix \ref{sec:add_sims} when fit using the proxy $h = \mu_0$.}
\label{fig:additional_sims_proxy}
\end{figure}

\end{document}